\documentclass[conference,letterpaper]{IEEEtran}

\usepackage{graphicx} 
\usepackage{amsmath}
\usepackage{amsfonts}
\usepackage{amsthm}
\usepackage[margin=1in]{geometry}
\usepackage{xcolor}
\usepackage{cite}
\usepackage{enumitem}

\newcommand{\rmd}{\mathrm{d}}
\newcommand{\bbE}{\mathbb{E}}\newcommand{\rme}{\mathrm{e}}

\newcommand{\sfD}{\mathsf{D}}

\newcommand{\supp}{{\mathsf{supp}}}

\newcommand{\chiSq}{\chi^2}
\newcommand{\TV}{\mathsf{TV}}

\newtheorem{thm}{Theorem}

\newtheorem{prop}{Proposition}

\newtheorem{lem}{Lemma}
\newtheorem{rem}{Remark}

\title{ Support Size of $\varepsilon$-Capacity-Achieving Inputs for the Amplitude-Constrained AWGN Channel }
\author{%
  \IEEEauthorblockN{Luca Barletta}
  \IEEEauthorblockA{Dipartimento di Elettronica, Informazione e Bioingegneria \\
                    Politecnico di Milano. Milan, Italy\\
                    Email: luca.barletta@polimi.it}
  \and
  \IEEEauthorblockN{Alex Dytso}
  \IEEEauthorblockA{Qualcomm Flarion Technology, Inc.\\ 
                    Bridgewater, NJ, USA\\
                    Email: odytso2@gmail.com}
}

\begin{document}

\maketitle

\begin{abstract}
We study the amplitude-constrained additive white Gaussian noise (AWGN) channel from the perspective of near-optimal input distributions. While it is known that the capacity-achieving input is discrete with finitely many mass points, the precise scaling of its support size as a function of the amplitude constraint remains an open problem. In this work, we instead consider the minimal support size required to achieve capacity up to an $\varepsilon$-gap. 
We introduce the quantity $K_\varepsilon(A)$, defined as the smallest support size among discrete inputs supported on $[-A,A]$ that achieves mutual information within $\varepsilon$ of capacity. We show that this relaxed formulation is significantly more tractable and admits sharp characterizations across different regimes of $\varepsilon$. In particular, when $\varepsilon$ decays polynomially with $A$, i.e., $\varepsilon = A^{-\beta}$ for $\beta \geq 1$, we establish that $K_\varepsilon(A) = \Theta(A\sqrt{\log A})$. For exponentially small gaps, we obtain bounds of order between $A\sqrt{\log A}$ and $A^{3/2}$. 
Our approach combines approximation-theoretic bounds for Gaussian mixtures with information-theoretic control of entropy via $\chi^2$-divergence, together with a wrapping argument that relates the problem to approximating the uniform distribution on the circle. Beyond the technical results, our framework provides a conceptual explanation for the variety of scaling laws observed in prior numerical studies, showing that these correspond to different regimes of $\varepsilon$-optimality rather than intrinsic properties of the exact optimizer.
\end{abstract}


\section{Introduction}
We consider an additive white Gaussian noise (AWGN) channel subject to a peak-power constraint. The
channel output is given by
\begin{equation}\label{eq:channel}
Y = X + Z,
\end{equation}
where the input random variable $X$ satisfies the constraint $|X| \le A$ almost surely (a.s.), and $Z$ is a standard normal random
variable independent of $X$.
The capacity under the amplitude constraint is
\begin{equation}\label{eq:cap_def}
C(A) := \sup_{P_X:\ \supp(P_X)\subset[-A,A]} I(X;Y).
\end{equation}
We denote by $X^*$ a capacity-achieving
input random variable and by $Y^*$ the corresponding induced output random variable. In general, both the
exact value of the capacity $C(A)$ and the precise structure of the capacity-achieving distribution $X^*$ remain
unknown.

\subsection{Problem Formulation}
In contrast to prior work focusing on exact optimality, we study the minimal support size required to achieve capacity up to an $\varepsilon$-gap. Specifically, for $\varepsilon > 0$, define
\begin{align}\label{eq:Keps_def}
K_\varepsilon(A)
:= \min \Big\{&|\supp(P_X)|:\ \supp(P_X)\subset[-A,A], \nonumber \\
& P_X \text{ discrete},\ I(X;Y)\ge C(A)-\varepsilon\Big\}.
\end{align}
While the precise scaling of $|\supp(P_{X^*})|$ remains unknown, this relaxed formulation turns out to be significantly more tractable. In particular, we are able to characterize the exact scaling of $K_\varepsilon(A)$ for a range of regimes. For example, when the gap decays polynomially with $A$, i.e., $\varepsilon = A^{-\beta}$ for some $\beta \geq 1$, we obtain a sharp characterization of $K_\varepsilon(A)$.

\subsection{Literature Review}
The literature on the amplitude constraint channel is large and we do not try to survey it fully and only mention key relevant results. For comprehensive review, interest readers are referred to~\cite{GaussianBoundsCard, dytso2017amplitude_Globecom,CISS2018} and references therein.

Studying the capacity of the amplitude-constrained AWGN channel is a classical problem in information theory, originating in the work of Shannon~\cite{Shannon:1948}. A fundamental result due to Smith~\cite{smith1969Thesis,smith1971information} shows that the capacity-achieving input distribution is discrete with finitely many mass points, in contrast to the average-power constrained setting where the capacity-achieving distribution is Gaussian.  Subsequent work further characterized the structure of the optimal input, including transition thresholds where binary and ternary constellations are optimal~\cite{sharma2010transition}. However, the precise scaling of the support size with the amplitude constraint remains unresolved: the best known non-asymptotic bounds, \cite{wang2025improvedlowerboundcardinality} and~\cite{GaussianBoundsCard}, place it between  $A\sqrt{\log A}$ and $A^2$, respectively.  

Somewhat intriguingly, numerical investigations have suggested a range of alternative asymptotic behaviors. In particular, Mattingly et al.~\cite{mattingly2018maximizing} reported an empirical scaling of order $A^{4/3}$ as $A \to \infty$, based on experiments involving not only the AWGN channel but also several non-Gaussian models, including the binomial channel and certain two-dimensional channels. A follow-up work by Abbott and Machta~\cite{abbott2019scaling} provided a heuristic, physics-inspired justification for this scaling. These findings coexist with Zhang’s spacing-based heuristic, which suggests growth of order $A\sqrt{\log A}$~\cite[pp.~91,~95,~96]{zhang1994discrete}, while earlier work conjectured linear scaling~\cite{GaussianBoundsCard}, a claim that has since been disproved~\cite{wang2025improvedlowerboundcardinality}.

We argue that this apparent diversity of scaling laws can be naturally interpreted through the lens of $\varepsilon$-optimality. Indeed, all numerical procedures implicitly operate with a finite tolerance, effectively computing inputs that are only $\varepsilon$-capacity-achieving for some algorithm-dependent $\varepsilon$. From this perspective, different observed scalings correspond to different regimes of $\varepsilon = \varepsilon(A)$, rather than intrinsic properties of the exact optimizer. This viewpoint also explains the well-known numerical sensitivity of the problem: in the large-$A$ regime, the optimal output distribution becomes nearly uniform in the interior, and deviations that distinguish competing inputs occur at a scale comparable to numerical precision. Consequently, implementations based on, e.g., the Blahut–Arimoto algorithm~\cite{blahut2003computation,arimoto1972algorithm}, which involve repeated numerical integrations of log-densities, are particularly susceptible to bias and instability. As a result, different numerical tolerances and methodologies can lead to markedly different empirical scaling laws.

In parallel, a large body of work has developed capacity bounds via entropy methods, duality, and estimation-theoretic representations; see~\cite{dytso2019amplitude} for a comprehensive review. 

\subsection{Outline and Contributions}
Section~\ref{sec:tools} collects the main technical tools used throughout the paper, including stability bounds for entropy, approximation results for Gaussian mixtures, and properties of the wrapping operation. These ingredients form the backbone of both the achievability and converse arguments.

Section~\ref{sec:main_results} presents the main results together with their derivations. In particular, we obtain sharp bounds on $K_\varepsilon(A)$ and provide a complete characterization in the regime where $\varepsilon$ decays at most polynomially in $A$. We also discuss the behavior in the exponential regime and highlight the transition in scaling.

Section~\ref{sec:conclusion} concludes the paper with a summary of the main findings and a discussion of open problems.

We conclude this section by presenting relevant notation. 

\subsection{Notation}
Throughout the paper, the deterministic scalar quantities are denoted by lower-case letters and random variables are denoted by uppercase letters.  

We denote the distribution  of a random variable $X$ by $P_{X}$. The support set of $P_X$ is denoted and defined as
\begin{align}
\supp(P_{X})= &\left\{ x:  \text{ for every open set $ \mathcal{D} \ni x $} \right. \nonumber \\ 
& \left.\quad \text{we have that $P_{X}( \mathcal{D})>0$}  \right\}. 
\end{align} 
The notation $| \cdot |$, depending on the context, denotes either absolute value or cardinality of the set.  All logarithm are taken with base $\rme$.  The density of a standard normal will be denoted by $\varphi$. 

 Given two probability distributions $P$ and $Q$ with probability densities functions (pdfs) $p$ and $q$, respectively, we will require the following distances: 
\begin{align}
  &\text{ Total Variation:}  &\mathsf{TV}(P \| Q) &= \frac{1}{2} \int | p(x) -q(x)| \rmd x, \label{eq:tv_def} \\
  &\text{ Relative Entropy:} &\sfD(P \| Q) &=  \int p(x) \log \frac{p(x)}{q(x)} \rmd x,\\
  &\text{ $\chi^2$ Divergence:} &\chi^2(P \| Q) &=  \int \frac{( p(x) -q(x))^2}{q(x)} \rmd  x ,\label{eq:xi_def}
\end{align}
with the understanding that the relative entropy and $\chi^2$ are equal to infinity if $P$ is not absolutely continuous with respect to $Q$.  

\section{Tools and Preliminaries}
\label{sec:tools}

In this section, we collect several technical ingredients that underlie our analysis. 
At a high level, our approach proceeds by approximating the output distribution induced by the capacity-achieving input using finite Gaussian mixtures, and then quantifying the resulting loss in mutual information. 
This leads to three main components: 
(i) a stability bound for entropy in terms of $\chi^2$-divergence, 
(ii) approximation guarantees for Gaussian mixtures, and 
(iii) a wrapping argument that allows us to compare distributions to the uniform law on the circle.

\subsection{Entropy Loss via $\chi^2$}

The first ingredient is a quantitative stability bound for differential entropy under $\chi^2$-perturbations. 
This allows us to convert approximation guarantees at the level of densities into bounds on mutual information.

\begin{lem}[Entropy loss controlled by $\chi^2$]\label{lem:entropy_stability}
Let $f,g$ be densities on $\mathbb R$ such that $\chi^2(g\|f)<\infty$ and $\int (\log f)^2 f<\infty$. Then,
\begin{equation}\label{eq:entropy_stability}
h(f)-h(g)\ \le  \|\log f\|_{L^2(f)}\sqrt{\chi^2(g\|f)}\ +\ \chi^2(g\|f),
\end{equation}
where $\|\log f\|_{L^2(f)}:=\big(\int (\log f)^2 f\big)^{1/2}$.
\end{lem}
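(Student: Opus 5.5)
The plan is to write the entropy gap as a relative entropy plus a linear correction term, and then bound each piece by the $\chi^2$-divergence. Since $\chi^2(g\|f)<\infty$, $g$ is absolutely continuous with respect to $f$, so $g=0$ wherever $f=0$. On $\{g>0\}$ we can then use the algebraic identity
\begin{equation*}
g\log g - f\log f \;=\; g\log\frac{g}{f} \;+\; (g-f)\log f .
\end{equation*}
Integrating it formally gives
\begin{equation*}
h(f)-h(g) \;=\; \sfD(g\|f) \;+\; \int (g-f)\log f .
\end{equation*}
I will first make this rigorous and then bound the two terms separately.

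\textbf{Step 1 (linear term).} Write $(g-f)\log f=\frac{g-f}{\sqrt f}\cdot \sqrt f\log f$ and apply Cauchy--Schwarz:
\begin{equation*}
\Big|\int (g-f)\log f\Big| \;\le\; \sqrt{\chi^2(g\|f)}\;\|\log f\|_{L^2(f)} .
\end{equation*}
This bound also shows that $(g-f)\log f$ is absolutely integrable. Moreover, $\int f|\log f| \le \|\log f\|_{L^2(f)}<\infty$ by Jensen/Cauchy--Schwarz. Hence $\int g|\log f|<\infty$, and in particular $h(f)$ is finite.

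\textbf{Step 2 (relative entropy term).} Use the standard comparison $\sfD(g\|f)\le \log\big(1+\chi^2(g\|f)\big)\le \chi^2(g\|f)$. The first inequality follows from Jensen's inequality applied to the concave $\log$:
\begin{equation*}
\sfD(g\|f)=\int g\log\frac gf \;\le\; \log\int \frac{g^2}{f} \;=\; \log\big(1+\chi^2(g\|f)\big).
\end{equation*}
This shows $\sfD(g\|f)$ is finite and nonnegative.

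\textbf{Step 3 (justifying the identity).} This is the only delicate point: we must check that $h(g)$ is well defined and that the integrated identity is not of the form $\infty-\infty$. By Steps 1 and 2, both $g\log(g/f)$ and $g\log f$ have well-defined integrals, the first being finite and the second absolutely integrable. Therefore $\int g\log g = \sfD(g\|f)+\int g\log f$ is finite, and subtracting the finite quantity $\int f\log f$ yields the identity. Combining the bounds from Steps 1 and 2 then gives \eqref{eq:entropy_stability}.

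I expect no real obstacle beyond this integrability bookkeeping. The only subtlety is making sure $\sfD(g\|f)$ is treated as the integral of the possibly sign-changing function $g\log(g/f)$. This is handled by noting that its negative part is integrable, since $g\log(g/f)\ge g-f$.
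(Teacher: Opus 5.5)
Your proof is correct and follows the paper's argument: with $u=g/f-1$, the paper's decomposition $h(f)-h(g)=\int uf\log f+\int(1+u)f\log(1+u)$ is exactly your $\int(g-f)\log f+\sfD(g\|f)$, and you handle the linear term with the same Cauchy--Schwarz step. The only differences are in the divergence term and in rigor: you bound $\sfD(g\|f)$ via Jensen, which gives the slightly sharper $\log(1+\chi^2(g\|f))$, where the paper uses the pointwise inequality $(1+u)\log(1+u)\le u+u^2$; and you supply the integrability bookkeeping (finiteness of $h(g)$, no $\infty-\infty$) that the paper leaves implicit.
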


\begin{proof}
Write $u:=\frac{g}{f}-1$, so that $g=(1+u)f$ and $\int u f = \int g-\int f=0$. Then $\chi^2(g\|f)=\int u^2 f$.
Compute
\begin{align*}
h(g)
&= -\int g\log g
= -\int (1+u)f\log\big((1+u)f\big) \\
&= -\int (1+u)f\log f \ - \ \int (1+u)f\log(1+u).
\end{align*}
Hence
\begin{align*}
h(f)-h(g)
&= -\int f\log f + \int (1+u)f\log f \\
&\quad+ \int (1+u)f\log(1+u) \\
&= \int u f \log f \ +\ \int (1+u)f\log(1+u).
\end{align*}
By Cauchy--Schwarz,
\begin{align*}
\Big|\int u f\log f\Big|
&\le \Big(\int u^2 f\Big)^{1/2}\Big(\int (\log f)^2 f\Big)^{1/2} \\
&= \|\log f\|_{L^2(f)}\sqrt{\chi^2(g\|f)}.
\end{align*}
For the second term, use $\log(1+u)\le u$ for all $u>-1$, so $(1+u)\log(1+u)\le (1+u)u=u+u^2$.
Integrating against $f$ yields
\begin{align*}
\int (1+u)f\log(1+u)&\le \int (u+u^2)f  \\
&= \int u f + \int u^2 f \\
&= \chi^2(g\|f).
\end{align*}
Combining the two bounds proves \eqref{eq:entropy_stability}.
\end{proof}

The key feature of Lemma~\ref{lem:entropy_stability} is that it provides  control of entropy loss in terms of $\chi^2$-divergence. 
In particular, small $\chi^2$-error directly translates into a small loss in mutual information, up to a multiplicative factor depending on $\|\log f\|_{L^2(f)}$.

We will apply this lemma in a setting where $f$ corresponds to the output density induced by the capacity-achieving input, and $g$ corresponds to an approximating Gaussian mixture. 
The next result provides a uniform bound on the prefactor $\|\log f\|_{L^2(f)}$ in this setting.

\begin{lem}\label{lem:logL2_bound}
Fix $A>0$. Let $Y = X+Z $ where $Z \sim \mathcal{N}(0,1)$ and $X\sim P_X$ be  supported on $[-A,A]$. Then, 
\begin{equation}\label{eq:logL2_bound}
\|\log f_{Y}\|_{L^2(f_Y)}\ \le\ \sqrt{10}(1+A^2).
\end{equation}
\end{lem}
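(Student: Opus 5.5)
The plan is to sandwich $\log f_Y$ pointwise between two explicit functions of $y$, and then integrate the square of the sandwich against $f_Y$ using the second moment of $Y$. Writing $f_Y(y)=\bbE[\varphi(y-X)]$, the upper bound is immediate: $f_Y(y)\le \varphi(0)=1/\sqrt{2\pi}$, so $\log f_Y(y)\le -\tfrac12\log(2\pi)<0$. Hence $\log f_Y$ is everywhere negative, and
\[
|\log f_Y(y)| = -\log f_Y(y).
\]
For the lower bound, since $|X|\le A$ a.s., we have $(y-X)^2\le (|y|+A)^2$. Therefore
\[
f_Y(y)\ \ge\ \frac{1}{\sqrt{2\pi}}\,\rme^{-(|y|+A)^2/2},
\]
which gives
\[
0\le -\log f_Y(y)\ \le\ \tfrac12\log(2\pi)+\tfrac12(|y|+A)^2\ \le\ \tfrac12\log(2\pi)+y^2+A^2,
\]
using $(|y|+A)^2\le 2y^2+2A^2$.

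Next I square this bound and integrate against $f_Y$:
\[
\|\log f_Y\|_{L^2(f_Y)}^2\ \le\ \bbE\big[(c+Y^2+A^2)^2\big], \qquad c:=\tfrac12\log(2\pi)\approx 0.92<1 .
\]
Replacing $c$ by $1$ makes this at most $\bbE[(1+A^2+Y^2)^2]=(1+A^2)^2+2(1+A^2)\bbE[Y^2]+\bbE[Y^4]$. The required moments are bounded using $Y=X+Z$ with $|X|\le A$ and $Z$ independent of $X$:
\[
\bbE[Y^2]=\bbE[X^2]+1\le A^2+1,
\qquad
\bbE[Y^4]=\bbE[X^4]+6\bbE[X^2]+3\le A^4+6A^2+3\le 3(1+A^2)^2 .
\]
The odd cross terms in $\bbE[Y^4]$ vanish because $\bbE Z=\bbE Z^3=0$. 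Summing the three contributions gives
\[
(1+A^2)^2+2(1+A^2)^2+3(1+A^2)^2=6(1+A^2)^2\le 10(1+A^2)^2 .
\]
Taking square roots yields \eqref{eq:logL2_bound}.

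There is no real obstacle here. The only step needing care is the pointwise lower bound on $f_Y$, which must hold uniformly over all $P_X$ supported on $[-A,A]$; this is exactly where the amplitude constraint enters, through the worst-case distance $|y|+A$. The remaining work is bookkeeping of constants, which leaves slack relative to $\sqrt{10}$. An alternative route splits the square via $(a+b)^2\le 2a^2+2b^2$, which lands closer to the constant $10$; either route suffices.
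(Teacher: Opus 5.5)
Your proof is correct and follows the paper's route: the pointwise bound $f_Y(y)\ge\varphi(|y|+A)$, then squaring and integrating against $f_Y$ using $|X|\le A$ and Gaussian moments. The only differences are minor: you expand the square exactly and obtain the sharper constant $\sqrt{6}$ instead of splitting via $(a+b)^2\le 2a^2+2b^2$, and you state explicitly that $f_Y\le 1/\sqrt{2\pi}$ forces $-\log f_Y\ge 0$, which is needed to square the pointwise bound and which the paper leaves implicit.
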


\begin{proof}
Fix $y\in\mathbb R$. Since $\varphi(x) = (2\pi)^{-1/2}e^{-x^2/2}$ is strictly decreasing in $|x|$, for $\theta\in[-A,A]$ we have
$|y-\theta|\le |y|+A$, hence $\varphi(y-\theta)\ge \varphi(|y|+A)$. Averaging gives

\begin{align}
    f_Y(y)&=\int \varphi(y-\theta) \rmd P_X(\theta) \nonumber\\
    &\ge\ \varphi(|y|+A)=(2\pi)^{-1/2}\exp\!\Big(-\frac{(|y|+A)^2}{2}\Big).
\end{align}
Thus,
\begin{equation}
    -\log f_Y(y)\ \le\ \frac{(|y|+A)^2}{2} + \frac12\log(2\pi)=:\frac{(|y|+A)^2}{2}+c_0. \label{eq:bound_on_fy}
\end{equation}
Consequently,
\begin{align}
    \|\log f_{Y}\|_{L^2(f_Y)}^2 &=  \bbE \left[ (-\log f_Y(Y))^2 \right]\\
    & \le  \bbE \left[ \left( \frac{(|X+Z|+A)^2}{2}+c_0 \right)^2 \right] \label{eq:bounding_log}\\
    & \le 8(1+A^2)^2+2c_0^2 \label{eq:_Lp_bounds_1} \\
    &\le 10(1+A^2)^2
\end{align}
where~\eqref{eq:bounding_log} follows from the bound in \eqref{eq:bound_on_fy}; and \eqref{eq:_Lp_bounds_1} follows from sequential application of the inequality $(a+b)^2 \le 2 (a^2 +b^2) $, the bound $|X| \le A$, and the evaluation of the moments of $Z$. 
\end{proof}

Lemma~\ref{lem:logL2_bound} shows that the entropy sensitivity grows at most quadratically in $A$. 
Combined with Lemma~\ref{lem:entropy_stability}, this implies that achieving a small $\chi^2$-approximation error is sufficient to ensure near-optimality in mutual information.

\subsection{Finite-Mixture Approximations}
The second ingredient is a sharp approximation result for Gaussian mixtures. 
Recall that for any probability measure $P$ supported on $[-A,A]$, the induced output density takes the form
\begin{equation}
f_P(y):=\int_{-A}^A \varphi(y-\theta)\,\rmd P(\theta), \label{eq:mix_density}
\end{equation}
i.e., a Gaussian location mixture.

Our goal is to approximate such densities using mixtures supported on finitely many points. 
The following result, due to Ma, Wu, and Yang, provides near-optimal bounds on this approximation error.

\begin{lem}[Ma--Wu--Yang approximation theorem \cite{BestApporximationGaussianMixture}]\label{lem:MWY}
Let $A>0$ and let $\mathcal P_A^{\mathrm{Bdd}}$ denote the set of probability measures supported on $[-A,A]$.
Let $\mathcal P_m$ denote the set of probability measures supported on at most $m$ points.
Define the worst-case best approximation error
\[
E^\star(m,\mathcal P_A^{\mathrm{Bdd}},\chi^2)
:=\sup_{P\in \mathcal P_A^{\mathrm{Bdd}}}\ \inf_{Q\in \mathcal P_A^{\mathrm{Bdd}}\cap \mathcal P_m}\ \chi^2(f_Q\|f_P),
\]
where $f_P$ is the mixture density \eqref{eq:mix_density}.
Then there exists a universal constant $\kappa\ge 16e^3$ such that for all $m\in\mathbb N$ and $A>0$,
\begin{align}\label{eq:MWY}
& E^\star(m,\mathcal P_A^{\mathrm{Bdd}},\chi^2) \nonumber\\ &\le\
\begin{cases}
\exp\!\big(- \frac{m\log m}{A^2}\big), & m\ge \kappa A^2,\\[0.3em]
\exp\!\big(- \frac{\log\kappa}{4\kappa}\frac{m^2}{A^2}\big), & 3\sqrt{\kappa A}\le m\le \kappa A^2.
\end{cases}
\end{align}
\end{lem}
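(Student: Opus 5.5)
This is the Ma--Wu--Yang theorem, which the paper cites rather than proves. The natural route is moment matching combined with a Hermite/Taylor expansion of the Gaussian kernel.

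\textbf{Step 1: reduction to moment matching.} Fix $P$ supported on $[-A,A]$. By Gauss quadrature there is a discrete $Q$ with at most $m$ atoms in $[-A,A]$ matching the first $L=2m-1$ moments of $P$; this uses the Chebyshev--Markov--Krein extremal representation and keeps the support inside the interval.

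\textbf{Step 2: expand the difference in Hermite polynomials.} Write
\[
f_Q(y)-f_P(y)=\varphi(y)\sum_{k\ge0}\frac{H_k(y)}{k!}\bigl(m_k(Q)-m_k(P)\bigr),
\]
using $\varphi(y-\theta)=\varphi(y)\sum_k H_k(y)\theta^k/k!$. The terms with $k\le L$ vanish. The denominator $f_P$ is bounded below by a Gaussian: on $|y|\lesssim$ a suitable scale, $f_P(y)\ge\varphi(|y|+A)$, exactly as in the proof of Lemma~\ref{lem:logL2_bound}. More efficiently, compare to $\varphi(y)e^{-A|y|-A^2/2}$. The $\chi^2$ integral is then bounded by
\[
\int \frac{\varphi(y)^2}{\varphi(y)}\,e^{A|y|+A^2/2}\Bigl(\sum_{k>L}\frac{|H_k(y)|(2A)^k}{k!}\Bigr)^2\rmd y .
\]

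\textbf{Step 3: estimate the tail.} Apply Cauchy--Schwarz and the orthogonality relation $\int H_k^2\varphi=k!$, after absorbing the exponential weight $e^{A|y|}$ by a shift or change of measure. This gives a tail of the form $e^{O(A^2)}\sum_{k>L}(cA)^{2k}/k!$. When $m\ge\kappa A^2$ this is dominated by $(cA^2/m)^{m}=\exp(-m\log(m/(cA^2)))$, which is at most $\exp(-m\log m/A^2)$-type once $\kappa$ is large. This regime is where the factor $e^{O(A^2)}$ is harmless.

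\textbf{Step 4: the intermediate regime (the main obstacle).} For $3\sqrt{\kappa A}\le m\le\kappa A^2$, naive Taylor truncation fails because $(cA)^{2k}/k!$ only begins decaying when $k\gtrsim A^2$. Here I would localize instead. Split $[-A,A]$ into about $A^2/(m/\text{const})$ intervals, each of length $\ell\approx m/A$ chosen so that within each block a Gaussian moment-matching with $m'\approx m^2/A^2$ atoms (the total atom count still being $m$) gives the block error $\exp(-c\,m'\log\kappa)$. The $\chi^2$ is then controlled blockwise, using the fact that $f_P$ restricted near a block is dominated by that block's contribution up to Gaussian tails. The delicate part is the cross-block interference in the denominator, and the choice of block width balancing $\ell^2\lesssim m'$. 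This is what produces the $m^2/A^2$ exponent with constant $\log\kappa/(4\kappa)$.

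The lower condition $m\ge3\sqrt{\kappa A}$ guarantees that each block receives at least a few atoms. I expect the bookkeeping of Step 4 to be the hardest part; in the paper I would simply invoke \cite{BestApporximationGaussianMixture}.
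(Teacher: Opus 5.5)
Deferring to \cite{BestApporximationGaussianMixture} is all the paper does, and your architecture is the right one: global moment matching for $m\ge\kappa A^2$, local moment matching below it. The problem is that your sketch claims to prove a statement that is false as printed, and this shows up concretely in Step~4. You assert that $m\ge 3\sqrt{\kappa A}$ guarantees a few atoms per block. By your own bookkeeping each block receives $m'\approx m^2/A^2$ atoms, so at $m=3\sqrt{\kappa A}$ you get $m'\approx 9\kappa/A\ll 1$: there are far more blocks than atoms. Done carefully, with block length $\ell=m/(2\kappa A)$ and $m'=\kappa\ell^2=m^2/(4\kappa A^2)$ (exactly what produces the constant $\frac{\log\kappa}{4\kappa}$), the construction needs the number of blocks, $4\kappa A^2/m$, to be at most $m$, i.e.\ $m\ge 2\sqrt{\kappa}\,A$.

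No cleverer construction can rescue the printed threshold. For $P=\mathrm{Unif}[-A,A]$ one has $f_P\le 1/(2A)$, so for every $Q=\sum_{i\le m}q_i\delta_{x_i}$
\[
1+\chi^2(f_Q\|f_P)=\int\frac{f_Q^2}{f_P}\ \ge\ 2A\sum_{i,j}q_iq_j\,\frac{e^{-(x_i-x_j)^2/4}}{2\sqrt{\pi}}\ \ge\ \frac{A}{\sqrt{\pi}\,m}.
\]
Hence $E^\star(m,\mathcal P_A^{\mathrm{Bdd}},\chi^2)\ge 1$ whenever $m\le A/(2\sqrt{\pi})$. This contradicts the claimed bound, which is $<1$, on the whole range $3\sqrt{\kappa A}\le m\le A/(2\sqrt{\pi})$, and that range is nonempty once $A\ge 40\pi\kappa$. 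The intermediate branch must start at $m$ of order $\sqrt{\kappa}\,A$ (presumably $3\sqrt{\kappa}\,A$ was intended), which is precisely what your block count requires.

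The large-$m$ branch fails in the opposite direction, for small $A$. What Step~3 honestly yields is $e^{O(A^2)}(CA^2/m)^{2m}$. Keep the exponent $2m$ that comes from matching $2m-1$ moments: at $A=1$ the target $\exp(-m\log m)$ has leading constant $1$, which $(CA^2/m)^{m}$ with $C>1$ does not beat. Comparing this with $\exp(-m\log m/A^2)$ works only when $A$ is bounded below (e.g.\ $A\ge 1$), not merely ``once $\kappa$ is large''.

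Now take $A\to 0$ with $m=2\ge\kappa A^2$. No two-atom $Q$ matches the first four moments of $\mathrm{Unif}[-A,A]$. Testing $f_Q-f_P$ against the Hermite polynomials $H_k$, $k\le 4$ (recall $\int H_k f_Q=\int\theta^k\,\rmd Q(\theta)$), shows the error is at least of order $A^8$, far above $\exp(-2\log 2/A^2)$. So the ``for all $A>0$'' claim cannot be proved either. The form Step~3 really gives, $\exp(-c\,m\log\frac{m}{CA^2})$, is the correct statement for all $A$. It is also what you need inside the blocks of Step~4, whose half-length $\ell/2$ can be below $1$.

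Finally, the cross-block interference you call delicate is a non-issue. Write $P=\sum_j w_jP_j$ and $Q=\sum_j w_jQ_j$; joint convexity gives $\chi^2(f_Q\|f_P)\le\sum_j w_j\,\chi^2(f_{Q_j}\|f_{P_j})$, and each term is translation invariant. So you do not need the claim that $f_P$ near a block is dominated by that block's contribution, which is false in general anyway.
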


The key takeaway from Lemma~\ref{lem:MWY} is that Gaussian mixtures supported on $m$ points can approximate arbitrary mixtures supported on $[-A,A]$ with exponentially small $\chi^2$-error. 
Moreover, the approximation exhibits two distinct regimes:
\begin{itemize}
    \item a \emph{quadratic regime}, where the error behaves like $\exp(-m^2/A^2)$,
    \item and a \emph{large-$m$ regime}, where the error behaves like $\exp(-m\log m / A^2)$.
\end{itemize}

This dichotomy will directly translate into the two regimes in our main results. 
In particular, it is precisely this approximation behavior that determines the scaling of $K_\varepsilon(A)$.

\subsection{Wrapped Random Variables} 
The final ingredient is a wrapping argument, which was introduced in \cite{wang2025improvedlowerboundcardinality} that allows us to compare output distributions to the uniform distribution on the circle. 
This plays a key role in the converse, where we lower bound the support size by quantifying how far the induced output can be from uniformity.

For $B>0$, define the wrapping map $\langle\cdot\rangle_B:\mathbb R\to[-\pi,\pi)$ by
\begin{equation}
\langle W\rangle_B:=\frac{\pi}{B}(W\bmod 2B),
\end{equation}
where $W \text{ mod } 2B := W -2 B \left \lfloor \frac{W+B}{2B} \right \rfloor$.

The wrapping operation has several useful properties summarized below. 
\begin{prop}
    \label{prop:summary_wrapping_operation} \text{ }
    \begin{enumerate}
    \item  (Wrapped density formula) 
If $W$ has density $f_W$, then $\langle W\rangle_B$ has density
\begin{equation}\label{eq:wrap_pdf}
f_{\langle W\rangle_B}(\theta)=\frac{B}{\pi}\sum_{k\in\mathbb Z} f_W\!\Big(\frac{B}{\pi}(\theta+2\pi k)\Big)
\end{equation}
for $\theta\in(-\pi,\pi)$.
\item (Uniformity after wrapping) Let $U\sim \mathrm{Unif}([-B,B])$ be independent of $Z\sim\mathcal N(0,1)$. Then $\langle U+Z \rangle_B$ is uniform on $(-\pi,\pi)$:
\begin{equation}\label{eq:wrap_unif}
f_{ \langle U+Z \rangle_B}(\theta)=\frac{1}{2\pi},\qquad \theta\in(-\pi,\pi).
\end{equation}
\item (Wrapped-mixture lower bound)  Let $X\in[-A,A]$ be discrete with $K:=|\supp(P_X)|\ge 2$, and let $U\sim\mathrm{Unif}([-A,A])$. Then
\begin{equation}\label{eq:wang_chi2}
\chiSq\!\Big(P_{\langle X+Z\rangle_A}\,\Big\|\,P_{\langle U+Z\rangle_A}\Big)\ \ge\ \frac{1}{2}\exp\!\Big(-4\pi^2\frac{K^2}{A^2}\Big).
\end{equation}
\item  (Uniform $L^\infty$ bound for the wrapped density) Let $Y=X+Z$ and assume $A\ge 1$ and $D(P_Y\|P_{Y^\star})\le \varepsilon$ with $\varepsilon\le 1/A$.
Then there is an explicit absolute constant $M_0$ such that
\begin{equation}\label{eq:wrapped_Linfty}
\|f_{\langle X+Z\rangle_A}\|_\infty\ \le\ M_0.
\end{equation}
One may take
\begin{equation}\label{eq:M0_def}
M_0
:=\Big(\frac{3}{\pi}+\frac{1}{\sqrt{2\pi}}\Big)\,\frac{2e^2}{\alpha_0}\,\pi e^2,
\quad \alpha_0=\int_{-1}^1 e^{-t^2/2}\,dt.
\end{equation}

\item Let $Y=X+Z$ and assume $A\ge 1$ and $D(P_Y\|P_{Y^\star})\le \varepsilon$ with $\varepsilon\le 1/A$. Then,
\begin{align}
    &C_0 \chiSq\!\Big(P_{\langle X+Z\rangle_A}\,\Big\|\,P_{\langle U+Z\rangle_A}\Big) \nonumber \\
    &\le     \TV(P_{\langle X+Z\rangle_A},P_{\langle U+Z\rangle_A})
\end{align}
where $C_0 = \frac{1}{4(2\pi M_0+1)}$. 
    \end{enumerate}
\end{prop}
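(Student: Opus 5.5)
The plan is to prove the five items in order, reusing earlier ones where possible. Items 1, 2 and 5 are elementary. Item 3 is quoted from \cite{wang2025improvedlowerboundcardinality}, and item 4 is the real work.

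\emph{Items 1 and 2.} For item 1, observe that $\theta\mapsto \frac{B}{\pi}\theta$ maps $[-\pi,\pi)$ onto $[-B,B)$. The preimage of $\theta$ under $\langle\cdot\rangle_B$ is therefore $\{\frac{B}{\pi}(\theta+2\pi k):k\in\mathbb Z\}$. The formula \eqref{eq:wrap_pdf} then follows by computing $P(\langle W\rangle_B\in S)=\sum_k P(W\in \frac{B}{\pi}(S+2\pi k))$ and changing variables, which is justified by monotone convergence. For item 2, $f_{U+Z}(w)=\frac{1}{2B}\int_{-B}^{B}\varphi(w-u)\,\rmd u$. Summing its $2B$-translates gives $\frac{1}{2B}\int_{\mathbb R}\varphi=\frac{1}{2B}$, because the intervals $[-B,B]+2Bk$ tile $\mathbb R$. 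Multiplying by $B/\pi$ yields $1/(2\pi)$.

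\emph{Item 3.} This is the lower bound of \cite{wang2025improvedlowerboundcardinality}, and we would cite it. To re-derive it, we would pass to Fourier coefficients on the circle. The $n$-th coefficient of $P_{\langle X+Z\rangle_A}$ equals $\bbE[e^{in\pi X/A}]e^{-\pi^2n^2/(2A^2)}$, while that of the uniform law vanishes for $n\ne0$. Parseval then gives
\[
\chiSq=2\pi\sum_{n\ne0}\big|\bbE[e^{in\pi X/A}]\big|^2e^{-\pi^2n^2/A^2}.
\]
A $K$-atom measure cannot have its first $K$ trigonometric moments all vanish. A Prony/Vandermonde argument shows that $\sum_{1\le n\le 2K}|\bbE e^{in\pi X/A}|^2$ is bounded below by an absolute constant. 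The Gaussian weight $e^{-\pi^2 n^2/A^2}$ evaluated at $n=2K$ then produces the stated $\frac12 e^{-4\pi^2K^2/A^2}$.

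\emph{Item 4.} This is the main obstacle. Write the wrapped density via \eqref{eq:wrap_pdf} as $\frac{A}{\pi}\sum_k f_Y(y+2Ak)$. It therefore suffices to show $\sup_y\sum_k f_Y(y+2Ak)\lesssim 1/A$, i.e., that $P_Y$ puts mass $O(1/A)$ on every unit interval. Any Gaussian mixture satisfies $f_Y(y)\le \frac{e^2}{\alpha_0}P(Y\in[y-1,y+1])$. The argument uses the bound $\varphi(t)\le e^{2}\varphi(s)\cdot$(const) for $|s-t|\le1$, which accounts for the constants $2e^2/\alpha_0$. The task thus reduces to bounding interval probabilities. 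Split $P(Y\in I)\le P_{Y^\star}(I)+\TV(P_Y,P_{Y^\star})$ and apply Pinsker to get $\TV\le\sqrt{\varepsilon/2}$. This is too weak, since it is of order $1/\sqrt A$ rather than $1/A$. So instead we would use the Donsker--Varadhan/data-processing inequality on the indicator of $I$: $P(I)\log\frac{P(I)}{Q(I)}+(1-P(I))\log\frac{1-P(I)}{1-Q(I)}\le\varepsilon$. From this, $P(I)\le e\,Q(I)+O(\varepsilon)$, and with $\varepsilon\le1/A$ that is $O(1/A)$ when $Q(I)=O(1/A)$. One then needs $P_{Y^\star}(I)\lesssim 1/A$ for unit intervals. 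This follows from the known bound $f_{Y^\star}\le \frac{c}{A}$ (e.g.\ from the KKT condition $i(x;P_{X^\star})=C(A)\ge \log(1+\frac{2A}{\sqrt{2\pi e}})$ together with $h(Y^\star)$ near $\log 2A$), or from $f_{Y^\star}(y)\le \frac{e^{2}}{\alpha_0}P_{X^\star}([y-2,y+2])$ plus a bound of $O(1/A)$ on the $P_{X^\star}$-mass of short intervals. That last step is where I am least sure. I would first try to derive it from the KKT equality $\sfD(\varphi(\cdot-x)\|f_{Y^\star})=C(A)$ at a support point near $y$, which forces $f_{Y^\star}$ to be of order $1/A$ in a neighborhood. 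Summing over the roughly $O(1)$ relevant windows per period, plus the Gaussian tails (the $\frac{1}{\sqrt{2\pi}}$ and $3/\pi$ terms), gives $M_0$.

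\emph{Item 5.} Write $p=f_{\langle X+Z\rangle_A}$ and $q=\frac1{2\pi}$ by item 2. Then
\[
\chiSq=2\pi\int(p-q)^2\le 2\pi\|p-q\|_\infty\int|p-q|\le 2\pi(M_0+\tfrac1{2\pi})\cdot 2\TV.
\]
This uses item 4 to bound $\|p-q\|_\infty\le M_0+\frac{1}{2\pi}$. The result is $\chiSq\le 2(2\pi M_0+1)\TV$, which is within the claimed constant $C_0^{-1}=4(2\pi M_0+1)$.
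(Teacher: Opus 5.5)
Your plan for items 4 and 5 follows the paper. It uses binary data processing on a length-$2$ interval via Lemma~\ref{lem:binary_KL_ineq} and the imported bound $\|f_{Y^\star}\|_\infty\le \rme/(2A)$. It then bounds the three central terms of \eqref{eq:wrap_pdf} and the Gaussian tails, and for item 5 uses $\|p-q\|_\infty\le M_0+\frac{1}{2\pi}$ exactly as in Lemma~\ref{lem:chi2_to_TV}. The paper only cites items 1--3; your direct proofs of items 1--2 are correct.

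One step rests on a false inequality. There is no bound $\varphi(t)\le C\varphi(s)$ for $|s-t|\le 1$, since $\varphi(t)/\varphi(s)=e^{(s-t)(s+t)/2}$ is unbounded. The inequality you actually need is still true, with a better constant. Componentwise, $\int_{-1}^{1}\varphi(u+t)\,\rmd t=\varphi(u)\int_{-1}^{1}e^{-t^2/2}\cosh(ut)\,\rmd t\ge\alpha_0\varphi(u)$. Averaging over $P_X$ then gives $f_Y(y)\le\alpha_0^{-1}P_Y([y-1,y+1])$ at \emph{every} $y$. This replaces the paper's bound $f_Y(y)\ge Me^{-(y-y_0)^2/2}$ near the maximizer $y_0$, which Lemma~\ref{lem:peak_stability} imports from \cite{wang2025improvedlowerboundcardinality}. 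Your estimate $P_Y(I)\le\rme\,P_{Y^\star}(I)+\varepsilon$ does follow from Lemma~\ref{lem:binary_KL_ineq}, because $\lambda\log\lambda-\lambda+1\ge\lambda-\rme$ for $\lambda\ge\rme$. Combining the two gives $\|f_Y\|_\infty\le(\rme^2+1)/(\alpha_0A)$, slightly better than the paper's $\rme^3/(\alpha_0A)$.

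The bound on $\|f_{Y^\star}\|_\infty$ must simply be imported, as the paper does, because neither fallback you propose works as stated. The KKT identity only controls the average $\bbE[\log f_{Y^\star}(x+Z)]$ at support points, not a pointwise value. The inequality $f_{Y^\star}(y)\le c\,P_{X^\star}([y-2,y+2])$ fails whenever no atom lies near $y$.

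For the $|k|\ge 2$ terms, state the decay $f_Y(y)\le f_Y(\pm A)\,e^{-(|y|-A)^2/2}$ for $|y|\ge A$. It holds because $\supp(P_X)\subset[-A,A]$, and it is what the paper uses.

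In your item-3 sketch, Parseval gives $\chiSq=\sum_{n\ne0}|m_n|^2e^{-\pi^2n^2/A^2}$ with $m_n=\bbE e^{in\pi X/A}$, with no factor $2\pi$. A Vandermonde-type argument tends to produce constants that depend on how far apart the atoms are. The Fejér kernel gives an absolute constant cleanly. With $w_j$ the atom weights, $1+2\sum_{n=1}^{2K-1}(1-\frac{n}{2K})|m_n|^2\ge 2K\sum_j w_j^2\ge 2$. Hence $\chiSq\ge e^{-4\pi^2K^2/A^2}$, which is even stronger than the stated bound.
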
  
\begin{proof}
    The proof of the first three statements can be found in \cite{wang2025improvedlowerboundcardinality}. The last two statements are shown in Appendix~\ref{AppA}. 
\end{proof}

Conceptually, the wrapping argument allows us to reduce the problem to approximating the uniform distribution on the circle using wrapped Gaussian mixtures. 
Since uniformity is highly structured, this provides a robust way to obtain lower bounds on the number of support points.

\section{Main Result}
\label{sec:main_results}

\subsection{Some Basic Properties of $K_\varepsilon(A)$}

In this section, we collect several basic but important structural properties of $K_\varepsilon(A)$. 
In particular, we show that $K_\varepsilon(A)$ behaves monotonically in $\varepsilon$ and recovers the support size of the capacity-achieving input in the limit $\varepsilon \to 0$.

\begin{thm}[Basic properties of $K_\varepsilon(A)$]
\label{thm:basic_props_K_eps}
Fix $A>0$. Then the following statements hold.

\begin{enumerate}
    \item  For every $\varepsilon>0$, the quantity $K_\varepsilon(A)$ is well-defined and finite. In particular,
    \[
    1 \le K_\varepsilon(A) \le |\supp(P_{X^*})| < \infty,
    \]
    where $P_{X^*}$ is the capacity-achieving input distribution.

    \item  The map $\varepsilon \mapsto K_\varepsilon(A)$ is non-increasing on $(0,\infty)$.

    \item  Choose an integer $1 \le m < |\supp(P_{X^*})| $ and let 
\begin{align}
    C_{-m}(A):=
\sup\Big\{
& I(X;Y):\ \supp(P_X)\subset[-A,A],\nonumber \\
& P_X \text{ discrete}, \nonumber \\
& |\supp(P_X)|\le |\supp(P_{X^*})|-m
\Big\},
\end{align}
and let 
\begin{equation}
    \delta_m(A) = C(A) - C_{-m}(A).
\end{equation}
Then, for every $0<\varepsilon<\delta_m(A)$
\begin{equation}
    K_\varepsilon(A) \ge |\supp(P_{X^*})|-m +1, \qquad .
\end{equation}
Consequently, by choosing $m=1$
\begin{align}
\lim_{\epsilon \to 0} K_\varepsilon(A)  =  |\supp(P_{X^*})|.
\end{align}

\end{enumerate}
\end{thm}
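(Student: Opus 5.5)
Part 1 follows from Smith's theorem: $P_{X^*}$ is discrete with finitely many mass points and attains $C(A)$. It is therefore feasible in \eqref{eq:Keps_def} for every $\varepsilon>0$. Hence the feasible set of cardinalities is a nonempty subset of $\mathbb N$, the minimum exists, and $K_\varepsilon(A)\le|\supp(P_{X^*})|<\infty$. The lower bound $K_\varepsilon(A)\ge 1$ is trivial, since any probability measure has nonempty support.

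Part 2 is set inclusion. If $\varepsilon_1<\varepsilon_2$, any discrete $P_X$ on $[-A,A]$ with $I(X;Y)\ge C(A)-\varepsilon_1$ also satisfies $I(X;Y)\ge C(A)-\varepsilon_2$. The feasible set for $\varepsilon_2$ therefore contains that for $\varepsilon_1$, so its minimum is no larger, i.e.\ $K_{\varepsilon_2}(A)\le K_{\varepsilon_1}(A)$.

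Part 3 I would prove by contradiction.
\begin{itemize}
\item Suppose $0<\varepsilon<\delta_m(A)$ and $K_\varepsilon(A)\le|\supp(P_{X^*})|-m$.
\item Then some discrete $P_X$ with $|\supp(P_X)|\le|\supp(P_{X^*})|-m$ satisfies $I(X;Y)\ge C(A)-\varepsilon$.
\item This $P_X$ is admissible in the definition of $C_{-m}(A)$, so $C_{-m}(A)\ge C(A)-\varepsilon$, i.e.\ $\delta_m(A)\le\varepsilon$. This contradicts $\varepsilon<\delta_m(A)$.
\end{itemize}

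The substantive point is that the statement is only meaningful when $\delta_m(A)>0$, i.e.\ $C_{-m}(A)<C(A)$. This is the step I expect to be the main obstacle. I would handle it with a compactness argument plus uniqueness of the optimizer:
\begin{enumerate}
\item The set of probability measures on $[-A,A]$ with at most $n:=|\supp(P_{X^*})|-m$ atoms is weakly compact. It is parametrized by a compact set of atom locations and weights, and limits can only merge atoms.
\item Mutual information $P_X\mapsto I(X;Y)=h(Y)-h(Z)$ is weakly continuous on $\mathcal P_A^{\mathrm{Bdd}}$. Output densities converge uniformly and are uniformly bounded with uniform Gaussian tails, so dominated convergence applies.
\item Hence the supremum $C_{-m}(A)$ is attained by some $P$ with at most $n<|\supp(P_{X^*})|$ atoms.
\item If $C_{-m}(A)=C(A)$, this $P$ would be capacity-achieving. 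Since $P_Y$ is strictly concave in $P_X$ and the map $P_X\mapsto P_Y$ is injective (Gaussian deconvolution), the capacity-achieving input is unique, so $P=P_{X^*}$. This contradicts the support size, so $\delta_m(A)>0$.
\end{enumerate}

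For the limit, take $m=1$. For $\varepsilon<\delta_1(A)$ we get $K_\varepsilon(A)\ge|\supp(P_{X^*})|$, and combined with Part 1 this gives equality. Since this holds for all sufficiently small $\varepsilon$, and by monotonicity from Part 2, $\lim_{\varepsilon\to0}K_\varepsilon(A)=|\supp(P_{X^*})|$.
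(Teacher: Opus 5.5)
Your proposal is correct and follows essentially the same route as the paper. You show that $C_{-m}(A)$ is attained by compactness of the $(|\supp(P_{X^*})|-m)$-atom family and continuity of $I(X;X+Z)$, use uniqueness of the capacity-achieving input to force $\delta_m(A)>0$, and then note that no input with at most $|\supp(P_{X^*})|-m$ atoms can be $\varepsilon$-feasible when $\varepsilon<\delta_m(A)$.

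Your extra arguments fill in steps the paper omits or cites to Smith: Parts~1--2, and uniqueness via strict concavity of $h(Y)$ in $P_Y$ plus injectivity of Gaussian convolution. That is what you meant by the loosely worded ``$P_Y$ is strictly concave in $P_X$''; $P_Y$ itself is linear in $P_X$.
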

\begin{proof}
We only show the last statement. Let $K^\star(A):=|\supp(P_{X^\star})|$. The maximizer \(P_{X^\star}\) is unique~\cite{smith1971information}. Fix an integer \(1\le m<K^\star(A)\), and set
\begin{equation}
P_X=\sum_{i=1}^{M} p_i\delta_{x_i},
\quad
(p_1,\ldots,p_M)\in\Delta_M,
\quad
|x_i|\le A,    
\end{equation}
where $M:=K^\star(A)-m$ and $\Delta_M$ is the probability simplex. Since \(\Delta_M\times[-A,A]^M\) is compact and \(I(X;X+Z)\) is continuous in the parameters \((p_1,\ldots,p_M,x_1,\ldots,x_M)\), the supremum defining \(C_{-m}(A)\) is attained. Suppose, by contradiction, that $C_{-m}(A)=C(A)$; Then a maximizer in the definition of \(C_{-m}(A)\) would also be capacity-achieving. By uniqueness, it would be \(P_X = P_{X^\star}\). However, $|\supp(P_X)|= 
M<K^\star(A)$
leads to a contradiction. Hence
$\delta_m(A)>0$.

Now, for every \(0<\varepsilon<\delta_m(A)\), we have $C(A)-\varepsilon
>
C(A)-\delta_m(A)
=
C_{-m}(A)$. 

On the other hand, by definition of $C_{-m}(A)$, every $P_X$ with
$|\supp(P_X)|\le K^*(A)-m$
satisfies
\begin{equation}
    I(X;Y)\le C_{-m}(A)<C(A)-\varepsilon.
\end{equation}
Thus, no such input can be $\varepsilon$-capacity-achieving. It follows that any $\varepsilon$-capacity-achieving input must satisfy
\begin{equation}
    |\supp(P_X)|\ge K^*(A)-m+1.
\end{equation}
By the definition of $K_\varepsilon(A)$, this implies
\begin{equation}
    K_\varepsilon(A)\ge K^*(A)-m+1=|\supp(P_{X^*})|-m+1.
\end{equation}
This proves that for every $0<\varepsilon<\delta_m(A)$,
\begin{equation}
    K_\varepsilon(A)\ge |\supp(P_{X^*})|-m+1.
\end{equation}
This concludes the proof. 
\end{proof}

\begin{rem}
Theorem~\ref{thm:basic_props_K_eps} shows that $K_\varepsilon(A)$ interpolates between two regimes. 
For large $\varepsilon$, small support sizes suffice, while as $\varepsilon \downarrow 0$, the quantity $K_\varepsilon(A)$ recovers the exact support size of the capacity-achieving input.  This motivates studying the regime where $\varepsilon \downarrow 0$ as $A$ grows.
Moreover, the quantity $\delta_m(A)$ quantifies how much capacity is lost when restricting to inputs with fewer than $|\supp(P_{X^*})|-m$ points.
\end{rem}

\subsection{Bounds}

We now state the main results of this work, which characterize the scaling of $K_\varepsilon(A)$ in different regimes of the capacity gap.

\begin{thm} \label{thm:main_result}
Suppose that $A>1600$. Then the following bounds hold.

\begin{itemize}
    \item \emph{Polynomial capacity gap.} For $\beta \ge 1$,
    \begin{align}
        \frac{1}{2\sqrt{2}\pi} A\sqrt{\log^{+}(c_1A)} 
        &\le\;  K_{ A^{-\beta}}(A) \nonumber\\
        &\le\; 32 \rme A \sqrt{c_2\log(A)}.
    \end{align}

    \item \emph{Exponential capacity gap.}
    \begin{equation}
        \frac{1}{2\sqrt{2}\pi} A\sqrt{\log^{+}(c_1A)} 
        \;\le\; K_{\rme^{-A}}(A) 
        \;\le\; c_3 A^{3/2}.
    \end{equation}
\end{itemize}

The constants are given by
\begin{equation}\label{eq:c1}
c_1 =  \frac{1}{8\left(1+\sqrt{\frac{\pi \rme}{2}}\right)\left( \Big(6+\sqrt{2\pi}\Big)\,\frac{2\pi \rme^4}{\alpha_0}+1\right)^2}, 
\end{equation}
\begin{equation}\label{eq:alpha_0}
\alpha_0=\int_{-1}^1 \rme^{-t^2/2}\,dt,    
\end{equation}
\begin{equation}
c_2=\frac{(2\beta+5) \rme}{4\log(2)+3}, 
\quad 
c_3 = 
\sqrt{\frac{12\kappa}{\log\kappa}}
+
\frac{3\sqrt{\kappa}}{1600}
+
\frac{1}{1600^{3/2}}
.
\end{equation}

\end{thm}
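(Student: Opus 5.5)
The plan splits into an achievability part (upper bounds) and a converse part (lower bounds). The converse is the same in both regimes, because $K_\varepsilon(A)$ is non-increasing in $\varepsilon$ (Theorem~\ref{thm:basic_props_K_eps}) and $\rme^{-A}\le A^{-\beta}$ for large $A$. It therefore suffices to prove the lower bound for $\varepsilon=A^{-1}$, i.e., for any input with $I(X;Y)\ge C(A)-1/A$.

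\emph{Upper bounds.} Take $P=P_{X^*}$ and apply Lemma~\ref{lem:MWY} to obtain $Q\in\mathcal P_A^{\mathrm{Bdd}}\cap\mathcal P_m$ with $\chi^2(f_Q\|f_{P})\le E^\star(m)$. Since $h(Y|X)$ is the same for every input, we have $C(A)-I(Q)=h(f_{Y^*})-h(f_Q)$. Lemmas~\ref{lem:entropy_stability} and \ref{lem:logL2_bound} then bound this by $\sqrt{10}(1+A^2)\sqrt{E^\star}+E^\star$. For the polynomial gap it is enough to have $E^\star\le A^{-2\beta-5}$, up to constants. I will use the large-$m$ branch with $m=\lceil 32\rme A\sqrt{c_2\log A}\rceil$; this choice satisfies $m\ge\kappa A^2$ only if the constants cooperate. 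If it does not, I will use the quadratic branch, which gives $\exp(-c\,m^2/A^2)\le A^{-2\beta-5}$ as soon as $m\gtrsim A\sqrt{(2\beta+5)\log A}$, and this matches the stated form $A\sqrt{c_2\log A}$. The precise constant $32\rme$ and the condition $m\ge 3\sqrt{\kappa A}$ would be checked using $A>1600$. For the exponential gap we need $E^\star\lesssim \rme^{-2A}A^{-4}$. The quadratic branch requires $m^2/A^2\gtrsim (\kappa/\log\kappa)\,(2A+4\log A+O(1))$, i.e., $m\approx\sqrt{12\kappa/\log\kappa}\,A^{3/2}$. The extra terms in $c_3$ account for the ceiling and the requirement $m\ge 3\sqrt{\kappa A}$, both absorbed using $A>1600$. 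I also need to check that $m\le\kappa A^2$ holds, or else switch to the large-$m$ branch, which is even better.

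\emph{Lower bound.} Let $X$ be $K$-point with $C(A)-I(X;Y)\le\varepsilon\le 1/A$. By the golden formula, $\sfD(P_Y\|P_{Y^*})\le C(A)-I(X;Y)\le\varepsilon$. Items~4--5 of Proposition~\ref{prop:summary_wrapping_operation} then give
\[
C_0\chi^2\big(P_{\langle Y\rangle_A}\|P_{\langle U+Z\rangle_A}\big)\le \TV\big(P_{\langle Y\rangle_A},P_{\langle U+Z\rangle_A}\big).
\]
I split the TV term via the triangle inequality through $\langle Y^*\rangle_A$, with wrapping as data processing. Pinsker gives $\TV(P_{\langle Y\rangle},P_{\langle Y^*\rangle})\le\sqrt{\varepsilon/2}\le (2A)^{-1/2}$. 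The remaining term $\TV(P_{\langle Y^*\rangle_A},P_{\langle U+Z\rangle_A})$ must also be $O(A^{-1/2})$. I would obtain this either from Pinsker plus a bound $\sfD(P_{Y^*}\|P_{U+Z})=O(1/A)$, or, more simply, by comparing $Y$ directly to $U+Z$ through capacity bounds: $h(U+Z)\ge\log(2A)+O(1/A)$ and $C(A)\ge h(U+Z)-h(Z)$ give $\sfD(P_Y\|P_{U+Z})\lesssim 1/A+(\text{gap terms})$. Combining with item~3 yields
\[
\tfrac{C_0}{2}\exp(-4\pi^2K^2/A^2)\le c/\sqrt{A},
\]
hence $K^2\ge \frac{A^2}{8\pi^2}\log(c_1A)$, i.e., $K\ge\frac{1}{2\sqrt2\pi}A\sqrt{\log^+(c_1A)}$. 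The constant $c_1$ collects $C_0$, $M_0$ and the $\sqrt{\pi\rme/2}$ term coming from the closeness of $Y^*$ to uniform.

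\emph{Main obstacle.} I expect the hardest step to be showing that $Y^*$ (or $Y$) is within $O(A^{-1/2})$ in TV of $U+Z$ after wrapping. This needs a quantitative capacity lower bound of the form $C(A)\ge\log(2A)-\tfrac12\log(2\pi\rme)+O(1/A)$, matched against $h(Y)\le h(U+Z)+\dots$ on the circle. My first attempt would use the wrapped entropy: $h(\langle Y\rangle)\le\log 2\pi$ with equality for the uniform law, and relate it to $h(Y)$ through the density formula~\eqref{eq:wrap_pdf}, exploiting that the mass of $Y$ outside $[-A-t,A+t]$ is Gaussian-small.
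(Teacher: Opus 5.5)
\textbf{There is a genuine gap.} Your architecture matches the paper's: the quadratic branch of Lemma~\ref{lem:MWY} with Lemmas~\ref{lem:entropy_stability}--\ref{lem:logL2_bound} for the upper bounds, and Tops\o{}e's inequality, items 3--5 of Proposition~\ref{prop:summary_wrapping_operation}, a triangle inequality through $\langle Y^\star\rangle_A$, Pinsker and data processing for the lower bound. Reducing both lower bounds to $\varepsilon=1/A$ by monotonicity is also fine. The problem is the step you flag as the main obstacle, and the ingredient you name for it is the wrong one. The bound $C(A)\ge\log(2A)-\frac12\log(2\pi\rme)$ is trivial ($C\ge h(U)-h(Z)$) and does not help. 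What the paper uses is that Tops\o{}e's inequality holds for \emph{every} admissible input, in particular for $U$:
\[
\sfD(P_{U+Z}\|P_{Y^\star})\le C(A)-I(U;U+Z).
\]
Combine this with the known upper bound $C(A)\le\log(1+2A/\sqrt{2\pi\rme})$ and the EPI bound $I(U;U+Z)\ge\frac12\log(1+4A^2/(2\pi\rme))$. With $x=2A/\sqrt{2\pi\rme}$ this gives
\[
C(A)-I(U;U+Z)\le\tfrac12\log\big(1+\tfrac{2x}{1+x^2}\big)\le\tfrac1x=\sqrt{\pi\rme/2}/A .
\]
That is exactly $b_0/A$, and it is where the factor $1+\sqrt{\pi\rme/2}$ in $c_1$ comes from. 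Data processing and Pinsker, which apply with either argument order in $\sfD$, then give $\TV\le\sqrt{b_0/(2A)}$.

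Your proposed substitutes are harder and are not carried out. These are the reverse divergence $\sfD(P_{Y^\star}\|P_{U+Z})$, a direct bound on $\sfD(P_Y\|P_{U+Z})$, or the wrapped-entropy comparison. Each needs the cross-entropy or edge term $\bbE[-\log f_{U+Z}(Y)]-\log(2A)$ to be $O(1/A)$. That holds only once you know $\|f_Y\|_\infty=O(1/A)$ (Lemma~\ref{lem:peak_stability}) and the tail bound $f_Y(y)\le f_Y(A)\rme^{-(|y|-A)^2/2}$ for $|y|>A$. Neither appears in your plan. For inputs merely supported on $[-A,A]$, the edge mass of $Y$ need not be small; it must be extracted from $\sfD(P_Y\|P_{Y^\star})\le\varepsilon$. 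Even if completed, that route would not reproduce the stated $c_1$.

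There is also a hole in the polynomial-gap achievability. With $m=\lceil 32\rme A\sqrt{c_2\log A}\rceil$, the case $m\ge\kappa A^2$ does occur at fixed $A$, namely once $\beta$ is of order $A^2/\log A$. There you invoke the large-$m$ branch, but that branch cannot deliver gap $A^{-\beta}$. At fixed $A$ we have $m\propto\sqrt\beta$, so $\exp(-m\log m/A^2)$ decays only like $\exp(-O(\sqrt\beta\log\beta))$. You need $\delta_A=\exp(-\Theta(\beta\log A))$. Since the statement is claimed for every $\beta\ge1$, this case must be closed differently. The paper uses $K_{A^{-\beta}}(A)\le|\supp(P_{X^\star})|\le 67A^2<\kappa A^2$, from the quantitative cardinality bound of~\cite{GaussianBoundsCard}; Smith's finiteness alone would not suffice. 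For moderate $\beta$ your quadratic-branch argument is the paper's. In the exponential regime the issue does not arise, since $c_3A^{3/2}<\kappa A^2$ for $A>1600$.
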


We make the following remarks: 
\begin{itemize}
    \item For polynomially decaying gaps, the upper and lower bounds match up to constants, yielding the characterization
    \[
    K_{A^{-\beta}}(A) = \Theta\big(A\sqrt{\log A}\big).
    \]
    In particular, the scaling is independent of $\beta \ge 1$ at the level of first-order asymptotics.

    \item The lower bound is universal across both regimes and reflects an intrinsic limitation: even moderately accurate approximation of the optimal output requires at least $A\sqrt{\log A}$ mass points.

    \item In contrast, the upper bounds reveal a  possible phase transition: while polynomial accuracy can be achieved with $A\sqrt{\log A}$ points, exponentially small gaps might require significantly larger support, up to order $A^{3/2}$.

    \item Taken together, these results show that the apparent diversity of scaling laws reported in the literature can be explained by the rate at which $\varepsilon$ decays with $A$.
\end{itemize}

\subsection{Achievability}

In this section, we demonstrate the achievability part of the main result.  We begin by presenting a general achievability bound. \begin{thm}[Achievability bound]\label{thm:eps_le_1_over_A}
		Fix $A\ge 1$ and $0<\varepsilon\le 1$. Then,
        \begin{equation}
            K_\epsilon(A) \le m
        \end{equation}
        where
\begin{equation}\label{eq:m_def_eps_le}
			m :=
			\begin{cases}
				m_1, & m_1\le \kappa A^2,\\
				m_2, & m_1> \kappa A^2.
			\end{cases}
		\end{equation} 
        and 
    \begin{align}
			m_1&=\left\lceil 3\sqrt{\kappa A}\ +\ A\sqrt{\frac{1}{c}\log\!\Big(\frac{1}{\delta_A}\Big)}\right\rceil, \label{eq:m1_eps_le} \\
			m_2&=\left\lceil \max\{3,\kappa A^2\}\ +\ A^2\log\!\Big(\frac{1}{\delta_A}\Big)\right\rceil, \label{eq:m2_eps_le} \\     
			\delta_A&= \min\left\{\frac{\varepsilon}{2},\frac{\varepsilon^2}{40(1+A^2)^2}\right\}. \label{eq:delta_eps_le}
		\end{align}
        where $c=\frac{\log\kappa}{4\kappa}$ and $\kappa$ is defined in Lemma~\ref{lem:MWY}. 
\end{thm}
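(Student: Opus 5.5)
The plan is to take $P=P_{X^\star}$ and let $Q$ be a near-best $m$-point approximation from Lemma~\ref{lem:MWY}. The $\chi^2$ error of the output densities will then be converted into a mutual-information gap using Lemmas~\ref{lem:entropy_stability} and~\ref{lem:logL2_bound}.

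Since $Z$ is independent of $X$ in both cases, $I(X;Y)=h(Y)-h(Z)$ for any input. Writing $f=f_{P_{X^\star}}$ and $g=f_Q$, this gives
\begin{equation*}
C(A)-I(X_Q;Y_Q)=h(f)-h(g).
\end{equation*}
Lemma~\ref{lem:logL2_bound} applied to $f$ gives $\|\log f\|_{L^2(f)}\le \sqrt{10}(1+A^2)$, which also verifies the integrability hypothesis of Lemma~\ref{lem:entropy_stability}. With $\delta:=\chi^2(g\|f)$, Lemma~\ref{lem:entropy_stability} then yields
\begin{equation*}
h(f)-h(g)\le \sqrt{10}(1+A^2)\sqrt{\delta}+\delta .
\end{equation*}
Now suppose $\delta\le\delta_A$. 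The choice of $\delta_A$ in \eqref{eq:delta_eps_le} gives $\delta\le \varepsilon/2$. It also gives $\sqrt{10}(1+A^2)\sqrt{\delta}\le \sqrt{10}(1+A^2)\,\varepsilon/(\sqrt{40}(1+A^2))=\varepsilon/2$. Hence the gap is at most $\varepsilon$, and $Q$ is feasible in \eqref{eq:Keps_def}, so $K_\varepsilon(A)\le |\supp Q|\le m$.

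It remains to show that Lemma~\ref{lem:MWY} delivers $\delta\le\delta_A$ with $m$ points. The supremum over $P$ bounds the infimum for the particular $P=P_{X^\star}$. If that infimum is not attained, we pick $Q$ achieving error at most $\delta_A$ (strict inequalities leave room), or argue attainment by compactness of $\mathcal P_A^{\mathrm{Bdd}}\cap\mathcal P_m$ and lower semicontinuity. We split into two cases.

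\emph{Case $m_1\le\kappa A^2$.} Here $m=m_1\ge 3\sqrt{\kappa A}$, so the quadratic regime applies and the error is at most $\exp(-c\,m_1^2/A^2)$. Since $m_1\ge A\sqrt{c^{-1}\log(1/\delta_A)}$, this is at most $\delta_A$. Note $\delta_A\le 1/2<1$, so the logarithm is positive.

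\emph{Case $m_1>\kappa A^2$.} Here we use $m=m_2\ge\kappa A^2$, so the large-$m$ regime gives error at most $\exp(-m_2\log m_2/A^2)$. Since $m_2\ge 3$, we have $\log m_2\ge 1$, so the error is at most $\exp(-m_2/A^2)$. Since $m_2\ge A^2\log(1/\delta_A)$, this is at most $\delta_A$.

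The step requiring most care is matching the regimes of Lemma~\ref{lem:MWY} to the case split. In the second case we must confirm $m_2\ge\kappa A^2$, which the $\max\{3,\kappa A^2\}$ term in \eqref{eq:m2_eps_le} ensures. In the first case the lower threshold $3\sqrt{\kappa A}$ is ensured by the additive term in \eqref{eq:m1_eps_le}. Everything else is direct substitution.
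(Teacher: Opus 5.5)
Your proposal is correct and follows the paper's proof essentially step for step. It uses the same case split matched to the two regimes of Lemma~\ref{lem:MWY} and the same lower bounds built into $m_1,m_2$ (with $\log m_2\ge 1$) to get $\chi^2(f_Q\|f_{P_{X^\star}})\le\delta_A$, and the same conversion to an $\varepsilon$-gap via Lemmas~\ref{lem:entropy_stability} and~\ref{lem:logL2_bound}; your extra remarks (the bounds are strict, so a suitable $Q$ exists even if the infimum is not attained; $\delta_A\le 1/2$; and the integrability hypothesis of Lemma~\ref{lem:entropy_stability}) are small rigor points the paper leaves implicit.
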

\begin{proof}
    Let $P^\star$ be capacity-achieving on $[-A,A]$ and write $f^\star:=f_{P^\star}$.
		We seek to apply Lemma~\ref{lem:MWY}, target $P=P^\star$ and $m$ chosen as in \eqref{eq:m_def_eps_le}. 
		
		\emph{Case  $m_1\le \kappa A^2$:} From \eqref{eq:m_def_eps_le},  $m=m_1$. Moreover, from  \eqref{eq:m1_eps_le},  we have that $m=m_1\ge 3\sqrt{\kappa A}$, and Lemma~\ref{lem:MWY} (quadratic regime) yields
		a $Q$ supported on at most $m$ points with
		\begin{equation}
		\chiSq(f_Q\|f^\star)\le \exp\!\Big(-c\frac{m^2}{A^2}\Big).
		\end{equation}
		Since by \eqref{eq:m1_eps_le} we also have  $m=m_1\ge A\sqrt{\frac{1}{c}\log(1/\delta_A)}$, we get that 	\begin{equation}\label{eq:chi2_case1_again}
			\chiSq(f_Q\|f^\star)\le \exp\!\Big(-c\frac{m_1^2}{A^2}\Big)\le \delta_A.
		\end{equation}

    \emph{Case  $m_1>\kappa A^2$:} From \eqref{eq:m_def_eps_le}, $m=m_2$. Moreover, from \eqref{eq:m2_eps_le}, we have that  $m=m_2 \ge \max\{3,\kappa A^2\}$, in particular $m\ge 3$ and therefore $\log m\ge \log 3\ge 1$.
		Lemma~\ref{lem:MWY} (large-$m$ regime) yields a $Q$ supported on at most $m$ points such that
		\[
		\chiSq(f_Q\|f^\star)\le \exp\!\Big(-\frac{m\log m}{A^2}\Big)\le \exp\!\Big(-\frac{m}{A^2}\Big).
		\]
		Since,  by \eqref{eq:m2_eps_le} $m=m_2\ge A^2\log(1/\delta_A)$, we have that 
		\begin{equation}\label{eq:chi2_case2_again}
			\chiSq(f_Q\|f^\star)\le \exp\!\Big(-\frac{m}{A^2}\Big)\le \delta_A.
		\end{equation}
		
		In both cases, we have produced a discrete $Q$ supported on at most $m$ points such that
		\begin{equation}\label{eq:chi2_final_again}
			\chiSq(f_Q\|f^\star)\le \delta_A.
            \end{equation}

    To couple $\delta_A$ and $\varepsilon$, let $X_Q$ be the distribution achieving the bound in \eqref{eq:chi2_final_again}.  
    Also, let  $Y_Q=X_Q+Z$, with density $f_Q$ and note that 
		\begin{align}
		&C(A)-I(X_Q;Y_Q)=h(f^\star)-h(f_Q) \\
          & \le\ \|\log f^\star\|_{L^2(f^\star)} \sqrt{\chiSq(f_Q\|f^\star)}\ +\ \chiSq(f_Q\|f^\star) \label{eq:use_continuity_bound}\\
          & \le  \sqrt{10}(1+A^2) \sqrt{\chiSq(f_Q\|f^\star)}\ +\ \chiSq(f_Q\|f^\star)\label{eq:bound_on_nomr} \\
          & \le  \sqrt{10}(1+A^2) \sqrt{\delta_A}\ + \delta_A \\
          & \le \frac{\varepsilon}{2}+\frac{\varepsilon}{2}=\varepsilon , \label{eq:using_delta_choice}
		\end{align}
        where \eqref{eq:use_continuity_bound} follows from Lemma~\ref{lem:entropy_stability};  \eqref{eq:bound_on_nomr} follows from Lemma~\ref{lem:logL2_bound};  and \eqref{eq:using_delta_choice} follows from the choice of $\delta_A$ in \eqref{eq:delta_eps_le}. 
      
       Therefore,  $X_Q$ is feasible in \eqref{eq:Keps_def} with $|\supp(X_Q)|\le m$. This concludes the proof. 
\end{proof}

With Theorem~\ref{thm:eps_le_1_over_A} at our disposal, we now show the two regimes of Theorem~\ref{thm:main_result}.

\paragraph*{Achievability for $K_{ \frac{1}{A^\beta}}(A)$} In Theorem~\ref{thm:eps_le_1_over_A}, let  $\varepsilon = \frac{1}{A^\beta} $ and note that for  $A\ge 1$ 
		\begin{equation}
		\frac{1}{\delta_A}=40A^{2\beta}(1+A^2)^2\le 40A^{2\beta}(2A^2)^2=160A^{2\beta+4},
		\end{equation}
		hence $\log(1/\delta_A)\le \log 160 + (2\beta+4)\log A$. Substituting into the expression for $m_1$ in \eqref{eq:m1_eps_le} and using $1/c=4\kappa/\log\kappa$
        \begin{align}
            m_1  &\le\ 3\sqrt{\kappa A} \nonumber\\
            &\quad +A\sqrt{\frac{4\kappa}{\log\kappa}\Big(\log 160 + (2\beta+4)\log A\Big)}\ +\ 1 \\
            &  \le  C_\beta\,A\sqrt{\log A}   =: B_\beta(A), \label{eq:m1_poly_B}
		\end{align}
        where in the last inequality we have used that $\log 160 + (2\beta+4)\log A \le (2\beta+5)\log A$ for all $A\ge 160$
		and let  $C_\beta:=4\sqrt{\frac{4(2\beta+5)\kappa}{\log\kappa}}$ (the last inequality holds because $A\sqrt{\log A}$ dominates $\sqrt{A}$ and $1$ for $A\ge \rme$).   { 
We now distinguish two cases: (i) If $m_1\le \kappa A^2$, then the first branch of
Theorem~\ref{thm:eps_le_1_over_A} applies, and hence
\begin{equation}
    K_{A^{-\beta}}(A)
    \le m_1
    \le B_\beta(A);
\end{equation}
(ii) Suppose instead that $m_1>\kappa A^2$. 
Since $P_{X^\star}$ is
$\varepsilon$-capacity-achieving for every $\varepsilon>0$, we have
\begin{align}
    K_{A^{-\beta}}(A)
    &\le |\supp(P_{X^\star})| \le a_2A^2+a_1A+a_0 \label{eq:quadratic_bound}  \\
    &\le (a_2+a_1+a_0)A^2 \label{eq:A_greater_1} \\
    &<67A^2 \label{eq:use_values_a} \\
    &<16\rme^3 A^2    \le \kappa A^2 < B_\beta(A),
\end{align}
where \eqref{eq:quadratic_bound} follows from~\cite{GaussianBoundsCard}; \eqref{eq:A_greater_1} follows from $A \ge 1$; \eqref{eq:use_values_a} follows from using the numerical values of $a_0, a_1, a_2$ of~\cite{GaussianBoundsCard}; and the last inequality follows from \eqref{eq:m1_poly_B} and the assumption $m_1 > \kappa A^2$.

Thus, in either case, $K_{A^{-\beta}}(A)
    \le C_\beta A\sqrt{\log A}$.
}
        
    \paragraph*{Achievability for $K_{ \rme^{-A} }(A)$} In Theorem~\ref{thm:eps_le_1_over_A} let  $\varepsilon=\rme^{-A}$.
		To bound $\log(1/\delta_A)$, use $1+A^2\le 2A^2$ for $A\ge 1$:
		\begin{equation}
		\frac{1}{\delta_A}=40\rme^{2A}(1+A^2)^2\le 40 \rme^{2A}(2A^2)^2=160 \rme^{2A} A^{4},	    
		\end{equation}
		hence $\log(1/\delta_A)\le \log 160 + 4\log A +2A$. Substituting into the expression for $m_1$ in \eqref{eq:m1_eps_le} and using $1/c=4\kappa/\log\kappa$
        \begin{align}
          m_1  &\le\ 3\sqrt{\kappa A} \nonumber \\
          &\quad+\ A\sqrt{\frac{4\kappa}{\log\kappa}\Big(\log 160 + 4\log A + 2A\Big)}\ +\ 1\\
          & \le 
3\sqrt{\kappa A}
+
\sqrt{\frac{12\kappa}{\log\kappa}}A^{3/2}
+1   \label{eq:use_boundlog}  \\
&\le 
\left(
\sqrt{\frac{12\kappa}{\log\kappa}}
+
\frac{3\sqrt{\kappa}}{1600}
+
\frac{1}{1600^{3/2}}
\right)A^{3/2} \\
&  =:c_3 A^{3/2} 
        \end{align}       
        where \eqref{eq:use_boundlog} follows from $\log160+4\log A\le A$ for $A \ge 1600$.

For $\kappa=16\rme^3$, the constant $c_3$ is strictly
smaller than $\kappa$. Hence, for $A\ge1600$,
\begin{equation}
m_1\le c_3A^{3/2}<\kappa A^2.
\end{equation}
Thus the first branch of
Theorem~\ref{thm:eps_le_1_over_A} applies and gives
\begin{equation}
K_{\rme^{-A}}(A)
\le m_1
\le c_3A^{3/2}.
\end{equation}

\subsection{Converse}    

We now show the converse bound. 

\begin{thm}
Let $A\ge 1$ and let $0<\varepsilon\le 1/A$. Then 
\begin{equation}\label{eq:Keps_bound_general}
K_\epsilon(A) \ge\ \frac{A}{2\pi}\sqrt{\log^+\!\Bigg(\frac{c_L}{\sqrt{\varepsilon/2}+\sqrt{b_0/(2A)}}\Bigg)},
\end{equation}
where $b_0 = \sqrt{\frac{\pi \rme}{2}}$, $M_0$ is defined in \eqref{eq:M0_def}  and
\begin{equation}\label{eq:cL_def}
c_L:=\frac{1}{8(2\pi M_0+1)}.
\end{equation}
\end{thm}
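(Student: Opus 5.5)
Let $X$ be any discrete input on $[-A,A]$ with $K=|\supp(P_X)|$ mass points and $I(X;Y)\ge C(A)-\varepsilon$, and let $U\sim\mathrm{Unif}([-A,A])$. The goal is to sandwich $\chi^2(P_{\langle X+Z\rangle_A}\|P_{\langle U+Z\rangle_A})$ between the lower bound of Proposition~\ref{prop:summary_wrapping_operation}(3) and an upper bound that is small when $\varepsilon$ is small and $A$ is large. Solving the resulting inequality for $K$ then gives \eqref{eq:Keps_bound_general}. If $K=1$, the mutual information is $0$, which is far below $C(A)-\varepsilon$ for $A\ge1$ (one can take $K\ge2$ after checking $C(A)>1/A$, or note the bound is trivial when $\log^+=0$). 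So assume $K\ge 2$.

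\textbf{Step 1: closeness to the optimal output.} By the saddle-point property of capacity, $D(P_{Y}\|P_{Y^\star})\le C(A)-I(X;Y)\le\varepsilon$. This holds because $I(X;Y)+D(P_Y\|P_{Y^\star})=\int D(P_{Y|X=x}\|P_{Y^\star})\,\rmd P_X(x)\le C(A)$. Since $\varepsilon\le1/A$, the hypotheses of Proposition~\ref{prop:summary_wrapping_operation}(4)--(5) are met, and part (5) gives
\[
\chi^2\big(P_{\langle X+Z\rangle_A}\|P_{\langle U+Z\rangle_A}\big)\le \tfrac{1}{C_0}\TV\big(P_{\langle X+Z\rangle_A},P_{\langle U+Z\rangle_A}\big).
\]

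\textbf{Step 2: bounding the TV.} Wrapping is a deterministic map, so by data processing and the triangle inequality,
\[
\TV(P_{\langle Y\rangle_A},P_{\langle U+Z\rangle_A})\le \TV(P_Y,P_{Y^\star})+\TV(P_{\langle Y^\star\rangle_A},P_{\langle U+Z\rangle_A}).
\]
Pinsker's inequality bounds the first term by $\sqrt{\varepsilon/2}$.

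For the second term, I would use known facts about the optimal output: $P_{Y^\star}$ is close to $P_{U+Z}$, with $D(P_{U+Z}\|P_{Y^\star})$ or a similar quantity at most $b_0/A$, where $b_0=\sqrt{\pi\rme/2}$. This comes from the standard capacity bounds
\[
C(A)\ge h(U+Z)-h(Z)\quad\text{and}\quad C(A)\le \log\!\big(1+2A/\sqrt{2\pi\rme}\big)
\]
(the gap is $O(1/A)$). Since $C(A)-I(U;U+Z)=D(P_{U+Z}\|P_{Y^\star})$ by the same saddle-point identity (equality holds on the support, inequality in general), we get $D(P_{U+Z}\|P_{Y^\star})\le C(A)-I(U;U+Z)\le b_0/A$. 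Pinsker then gives $\TV\le\sqrt{b_0/(2A)}$, and data processing transfers this through wrapping. Combining the two terms,
\[
\chi^2\le \frac{1}{C_0}\Big(\sqrt{\varepsilon/2}+\sqrt{b_0/(2A)}\Big).
\]

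\textbf{Step 3: solving for $K$.} From Proposition~\ref{prop:summary_wrapping_operation}(3),
\[
\tfrac12 \rme^{-4\pi^2K^2/A^2}\le \frac{1}{C_0}\Big(\sqrt{\varepsilon/2}+\sqrt{b_0/(2A)}\Big).
\]
Since $C_0/2=c_L$, this reads
\[
\rme^{-4\pi^2K^2/A^2}\le \frac{\sqrt{\varepsilon/2}+\sqrt{b_0/(2A)}}{c_L},
\]
so
\[
K\ge \frac{A}{2\pi}\sqrt{\log\frac{c_L}{\sqrt{\varepsilon/2}+\sqrt{b_0/(2A)}}}.
\]
When the logarithm is negative the bound is trivial, which is why $\log^+$ appears. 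Minimizing over feasible $X$ gives the claim.

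\textbf{Main obstacle.} The main obstacle is Step 2's second term: the inequality $C(A)-I(U;U+Z)\le b_0/A$ with exactly $b_0=\sqrt{\pi\rme/2}$. I would first try the upper bound $C(A)\le\log(1+\sqrt{2/(\pi\rme)}\,A)$. For the lower bound on $h(U+Z)$, I would use the entropy power inequality in the form $h(U+Z)\ge\tfrac12\log(\rme^{2h(U)}+\rme^{2h(Z)})$, i.e.\ $\rme^{2h(U+Z)}\ge 4A^2+2\pi\rme$. That gives
\[
I(U;U+Z)\ge\tfrac12\log\!\big(1+\tfrac{4A^2}{2\pi\rme}\big),
\]
and bounding the difference of the two logarithms should give an $O(1/A)$ gap with a constant of the form $\sqrt{\pi\rme/2}$.
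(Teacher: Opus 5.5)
Your proof is correct and follows the paper's argument essentially step for step: the saddle-point bound $\sfD(P_Y\|P_{Y^\star})\le\varepsilon$, parts (3) and (5) of Proposition~\ref{prop:summary_wrapping_operation}, the triangle inequality, Pinsker, data processing, and $C_0/2=c_L$. The one difference is that you derive $\sfD(P_{U+Z}\|P_{Y^\star})\le C(A)-I(U;U+Z)\le b_0/A$ yourself from the McKellips upper bound and the EPI lower bound, whereas the paper cites it. Your derivation closes with exactly $b_0=\sqrt{\pi\rme/2}$, because $\tfrac12\log\big(1+\tfrac{2aA}{1+a^2A^2}\big)\le\tfrac{1}{aA}$ with $a=\sqrt{2/(\pi\rme)}$.

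One small correction concerns $K=1$. It is not excluded for all $A\ge1$: since $C(1)\le\log(1+a)<1$, a single mass point is feasible at $A=\varepsilon=1$. This is harmless, because $\log^+>0$ forces $A>b_0/(2c_L^2)$. In that range $C(A)>1/A\ge\varepsilon$, which rules out $K=1$, so Proposition~\ref{prop:summary_wrapping_operation}(3) applies.
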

\begin{proof}
Assume $X$ is such that  $C (A) - I(X;Y) \le \epsilon$, which by using \cite{topsoe1967information} implies that 
\begin{equation}
    \sfD(P_Y\|P_{Y^\star}) \le C (A) - I(X;Y) \le \epsilon. \label{eq:Stability_bound_used}
\end{equation}
We also need the following bound \cite{wang2025improvedlowerboundcardinality}
\begin{equation}
    \sfD(P_{\langle U+Z\rangle_A}\|P_{\langle Y^\star\rangle_A}) \le \frac{b_0}{A} \label{eq:stability_under_uniform}
\end{equation}
Now, 
\begin{align}
 &\frac{C_0}{2}\exp\!\Big(-4\pi^2\frac{K^2}{A^2}\Big) \le  C_0 \chiSq\!\Big(P_{\langle X+Z\rangle_A}\,\Big\|\,P_{\langle U+Z\rangle_A}\Big)  \label{eq:CHi_lower_bound} \\
 &\le     \TV(P_{\langle X+Z\rangle_A},P_{\langle U+Z\rangle_A}) \label{eq:Chi_TV_bound}\\
&\le   \TV(P_{\langle X+Z\rangle_A},P_{\langle Y^\star\rangle_A})+\TV(P_{\langle U+Z\rangle_A}, P_{\langle Y^\star\rangle_A}) \label{eq:using_triangl_inequality} \\
& \le \sqrt{ \frac{1}{2}  \sfD(P_{\langle X+Z\rangle_A}\|P_{\langle Y^\star\rangle_A})}  + \sqrt{\frac{1}{2} \sfD(P_{\langle U+Z\rangle_A}\|P_{\langle Y^\star\rangle_A})} \label{eq:using_pinskers_inequlaity}
  \\
  & \le \sqrt{ \frac{1}{2}        \sfD(P_Y\|P_{Y^\star})}  +  \sqrt{\frac{1}{2} \sfD(P_{U+Z}\|P_{Y^\star})} \label{Eq:data_processing_inequality} \\
 & \le \sqrt{ \frac{\epsilon}{2}   }  +   \sqrt{\frac{b_0}{2A}}, \label{eq:last_bound}
\end{align}
where \eqref{eq:CHi_lower_bound} and \eqref{eq:Chi_TV_bound} follow from Proposition~\ref{prop:summary_wrapping_operation}; \eqref{eq:using_triangl_inequality} follows from the triangular inequality; 
\eqref{eq:using_pinskers_inequlaity} follows from Pinsker's inequality; \eqref{Eq:data_processing_inequality} follows from data processing inequality; the first bound in \eqref{eq:last_bound} follows from \eqref{eq:Stability_bound_used}
and the  second bound follows from \eqref{eq:stability_under_uniform}. 
Rearrange and take logarithms  of \eqref{eq:last_bound} (using $\log^+(x):=\max\{\log x,0\}$):
\begin{equation}
4\pi^2\frac{K^2}{A^2}\ \ge\ \log^+\!\Bigg(\frac{c_L}{\sqrt{\varepsilon/2}+\sqrt{b_0/(2A)}}\Bigg).  \label{eq:second_to_last_bound}
\end{equation}
By rearranging the terms, we conclude the proof. 
\end{proof}

As a consequence of Theorem~\ref{thm:main_result} note that  since $\varepsilon\le 1/A$, then $\sqrt{\varepsilon/2}\le 1/\sqrt{2A}$ and hence
$\sqrt{\varepsilon/2}+\sqrt{b_0/(2A)}\le \sqrt{(1+b_0)/ (2A)}$.
Plugging this bound into \eqref{eq:second_to_last_bound} yields
\begin{align*}
K\ &\ge\ \frac{A}{2\pi}\sqrt{\log^+\!\Big(c_L\sqrt{\frac{2A}{1+b_0}}\Big)} \nonumber\\
& =\ \frac{A}{2\sqrt{2}\,\pi}\sqrt{\log^+\!\Big(\frac{2c_L^2}{1+b_0}\,A\Big)},    
\end{align*}
which gives the explicit scaling form
\begin{equation}\label{eq:Keps_AsqrtlogA}
K\ \ge\ \frac{A}{2\sqrt{2}\,\pi}\sqrt{\log^+\!\big(c\,A\big)},
\qquad
c:=\frac{2c_L^2}{1+b_0}.
\end{equation}
Consequently, for all $A$ large enough (so that $cA>1$),
\[
K_\varepsilon(A)\ \ge\ \frac{A}{2\sqrt{2}\,\pi}\sqrt{\log(cA)}.
\]

\section{Conclusions}
\label{sec:conclusion}
In this work, we studied the amplitude-constrained AWGN channel from the perspective of near-optimal input distributions. Rather than focusing on the exact capacity-achieving input, whose support size remains poorly understood, we introduced the quantity $K_\varepsilon(A)$, which captures the minimal support size required to achieve capacity up to an $\varepsilon$-gap. 

We showed that this relaxed formulation is significantly more tractable and admits sharp characterizations across different regimes. In particular, for polynomially decaying gaps, we established that $K_\varepsilon(A)=\Theta(A\sqrt{\log A})$, while for exponentially small gaps, the required support size increases to at most order $A^{3/2}$. 

Beyond the technical results, our approach provides a conceptual explanation for the variety of scaling laws observed in prior numerical studies. Namely, different empirical scalings can be interpreted as arising from different implicit choices of $\varepsilon$. 

Several open problems remain. In particular, it would be of interest to obtain tighter bounds in the exponential regime, as well as to better understand the behavior of the exact optimizer and its relation to $K_\varepsilon(A)$ as $\varepsilon \to 0$. More broadly, the $\varepsilon$-capacity perspective may prove useful in other settings where exact structural characterization is difficult but near-optimal behavior is more accessible.

 \appendices
\section{Proof of Proposition~\ref{prop:summary_wrapping_operation}}\label{AppA}

\subsection{Proof of Property 4)}

We start with some helper lemmas. 
\begin{lem}\label{lem:binary_KL_ineq}
For any $p,q\in(0,1)$,
\begin{equation}\label{eq:binary_lower}
d(p\|q):=p\log\frac{p}{q}+(1-p)\log\frac{1-p}{1-q}\ \ge\ p\log\frac{p}{q}+q-p.
\end{equation}
\end{lem}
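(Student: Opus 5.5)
The plan is to show that $g(p):=d(p\|q)-p\log\frac{p}{q}-(q-p)$ is nonnegative, which reduces the claim to a single elementary inequality for the second term of $d(p\|q)$. Write
\[
g(p)=(1-p)\log\frac{1-p}{1-q}-(q-p).
\]
So it suffices to prove
\[
(1-p)\log\frac{1-p}{1-q}\ \ge\ q-p\qquad\text{for all } p,q\in(0,1).
\]

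The key step is the inequality $\log t\ge 1-\frac1t$ for $t>0$. This follows from $\log s\le s-1$ applied with $s=1/t$. We apply it with $t=\frac{1-p}{1-q}>0$, which is positive because $p,q\in(0,1)$. Multiplying by $1-p>0$ gives
\[
(1-p)\log\frac{1-p}{1-q}\ \ge\ (1-p)\Big(1-\frac{1-q}{1-p}\Big)=(1-p)-(1-q)=q-p.
\]
Adding $p\log\frac{p}{q}$ to both sides yields \eqref{eq:binary_lower}.

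There is no real obstacle here. The only point to watch is that $1-p$ and $1-q$ are strictly positive, so the logarithm is well defined and multiplying by $1-p$ preserves the direction of the inequality.
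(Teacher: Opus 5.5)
Your proof is correct and is essentially the paper's argument: the paper invokes $\log(1-u)\ge -\frac{u}{1-u}$, which is your inequality $\log t\ge 1-\frac{1}{t}$ with $t=1-u=\frac{1-p}{1-q}$. The paper states this only for $u\in(0,1)$, which corresponds to $p>q$ (the only case used later). Your formulation for all $t>0$ covers every $p,q\in(0,1)$, as the lemma claims.
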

\begin{proof}
The proof follows from inequality:
for $u\in(0,1)$,
\begin{equation}\label{eq:log_ineq}
\log(1-u)\ \ge\ -\frac{u}{1-u}.
\end{equation}
\end{proof}

\begin{lem}\label{lem:peak_stability}
Let $Y=X+Z$ and $Y^\star=X^\star+Z$, where $X$ and $X^\star$ are supported on $[-A,A]$.
Assume $\sfD(P_Y\|P_{Y^\star})\le \varepsilon$.
Let $M:=\|f_Y\|_\infty$ and define
\begin{equation}
\alpha_0:=\int_{-1}^1 e^{-t^2/2}\,dt.
\end{equation}
Then,
\begin{equation}\label{eq:peak_stability}
M\ \le\ \max\Big\{\frac{\varepsilon}{\alpha_0},\ \frac{2e^2}{\alpha_0}\,\|f_{Y^\star}\|_\infty\Big\}.
\end{equation}
Moreover,  for $A \ge 1$, $\|f_{Y^\star}\|_\infty \le \frac{\rme}{2A}$ as shown in \cite[Prop.~1]{wang2025improvedlowerboundcardinality}. 
\end{lem}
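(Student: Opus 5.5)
The plan is to turn a large peak of $f_Y$ into a large probability mass on a short interval, and then use the binary reduction of $\sfD(P_Y\|P_{Y^\star})$ from Lemma~\ref{lem:binary_KL_ineq} to show that this mass cannot far exceed what $P_{Y^\star}$ assigns to the same interval. The interval will be $I=[y_0-1,y_0+1]$, where $y_0$ is a maximizer of $f_Y$. Such a maximizer exists because $f_Y=\bbE[\varphi(\cdot-X)]$ is continuous (indeed smooth) and tends to $0$ at $\pm\infty$, since $X$ is bounded.

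\textbf{Step 1 (local lower bound around the peak).} I would write
\begin{equation*}
\varphi(y_0+t-x)=\varphi(y_0-x)\,e^{-t(y_0-x)-t^2/2}.
\end{equation*}
This gives $f_Y(y_0+t)=M e^{-t^2/2}\,\bbE\big[e^{-t(y_0-X)}\,\big|\,Y=y_0\big]$, where the conditional law of $X$ given $Y=y_0$ has density proportional to $\varphi(y_0-x)\,\rmd P_X(x)$. Since $y_0$ is an interior maximum, $f_Y'(y_0)=\bbE[(X-y_0)\varphi(y_0-X)]=0$, so the posterior mean of $y_0-X$ vanishes. Jensen's inequality then yields $\bbE[e^{-t(y_0-X)}\mid Y=y_0]\ge 1$, and hence $f_Y(y_0+t)\ge M e^{-t^2/2}$ for all $t$. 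Integrating over $|t|\le1$ gives $p:=P_Y(I)\ge \alpha_0 M$. In the other direction, $q:=P_{Y^\star}(I)\le 2\|f_{Y^\star}\|_\infty$.

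\textbf{Step 2 (binary data processing).} Applying the data processing inequality to the indicator of $I$ gives $\varepsilon\ge \sfD(P_Y\|P_{Y^\star})\ge d(p\|q)$. Both $p$ and $q$ lie in $(0,1)$ because the Gaussian densities are everywhere positive. Lemma~\ref{lem:binary_KL_ineq} then gives $\varepsilon\ge p\log(p/q)+q-p\ge p\big(\log(p/q)-1\big)$. I would now split into two cases.
\begin{itemize}
\item If $p\ge e^2 q$, then $\log(p/q)-1\ge 1$, so $\varepsilon\ge p\ge \alpha_0 M$, i.e.\ $M\le \varepsilon/\alpha_0$.
\item Otherwise $\alpha_0 M\le p<e^2 q\le 2e^2\|f_{Y^\star}\|_\infty$, i.e.\ $M\le \frac{2e^2}{\alpha_0}\|f_{Y^\star}\|_\infty$.
\end{itemize}
Together the two cases give \eqref{eq:peak_stability}. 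The final claim, $\|f_{Y^\star}\|_\infty\le \rme/(2A)$ for $A\ge1$, is simply quoted from \cite[Prop.~1]{wang2025improvedlowerboundcardinality}.

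The main obstacle is Step 1. A generic Gaussian mixture need not be uniformly spread near a point, and the mass bound on $I$ works only because $y_0$ is chosen as a critical point. There, the first-order (score) term vanishes, so Jensen removes the exponential tilt. If differentiating under the expectation needed justification, I would note that $X$ is bounded, so dominated convergence applies.
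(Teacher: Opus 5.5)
Your proposal is correct and follows the paper's proof essentially step for step. It uses the same interval $I=[y_0-1,y_0+1]$, the same binary data-processing reduction via Lemma~\ref{lem:binary_KL_ineq}, and the same case split at $p/q=e^2$. The only difference is that you derive the local bound $f_Y(y_0+t)\ge Me^{-t^2/2}$ yourself, from the vanishing derivative at the maximizer plus Jensen on the exponential tilt, whereas the paper simply cites it from \cite[Eq.~(56)]{wang2025improvedlowerboundcardinality}; your derivation is valid and makes the lemma self-contained.
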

\begin{proof}
Let $y_0$ maximize $f_Y$, so $f_Y(y_0)=M$. By \cite[Eq.~(56)]{wang2025improvedlowerboundcardinality}, for $y\in[y_0-1,y_0+1]$,
\begin{equation}
f_Y(y)\ \ge\ M e^{-(y-y_0)^2/2}.
\end{equation}
Hence, with $I:=[y_0-1,y_0+1]$,
\begin{equation}
p:=P_Y(I)=\int_I f_Y(y)\,dy\ \ge\ M\int_{-1}^1 e^{-t^2/2}\,dt = \alpha_0 M.
\end{equation}
Also $q:=P_{Y^\star}(I)\le 2\|f_{Y^\star}\|_\infty$ (since $I$ has length $2$).
By data processing (coarsening to the event $I$), $\sfD(P_Y\|P_{Y^\star})\ge d(p\|q)$, hence
\begin{equation}
\varepsilon\ \ge\ d(p\|q).
\end{equation}
By Lemma~\ref{lem:binary_KL_ineq}, $d(p\|q)\ge p\log(p/q)+q-p$.
Consider two cases.

\emph{Case 1: $p/q\ge e^2$.} Then $\log(p/q)\ge 2$ and thus
\begin{equation}
\varepsilon\ \ge\ p\log(p/q)+q-p\ \ge\ 2p - p = p.
\end{equation}
Therefore $p\le \varepsilon$, and since $p\ge \alpha_0 M$, we get $M\le \varepsilon/\alpha_0$.

\emph{Case 2: $p/q< e^2$.} Then $p<e^2 q\le 2e^2\|f_{Y^\star}\|_\infty$, and since $p\ge \alpha_0 M$,
\begin{equation}
M\ \le\ \frac{2e^2}{\alpha_0}\,\|f_{Y^\star}\|_\infty.
\end{equation}
Combining the two cases yields \eqref{eq:peak_stability}. \end{proof}

We now prove our final claim which is  a uniform $L^\infty$ bound for the wrapped density. 

\begin{lem}\label{lem:wrapped_Linfty}
Let $Y=X+Z$ and assume $A\ge 1$ and $\sfD(P_Y\|P_{Y^\star})\le \varepsilon$ with $\varepsilon\le 1/A$.
Then there is an explicit absolute constant $M_0$ such that
\begin{equation}\label{eq:wrapped_Linfty}
\|f_{\langle X+Z \rangle_A}\|_\infty\ \le\ M_0.
\end{equation}
One may take
\begin{equation}\label{eq:M0_def_app}
M_0
:=\Big(\frac{3}{\pi}+\frac{1}{\sqrt{2\pi}}\Big)\,\frac{\rme^3}{\alpha_0},
\qquad \alpha_0=\int_{-1}^1 e^{-t^2/2}\,dt.
\end{equation}
\end{lem}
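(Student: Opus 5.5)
Our plan is to combine the wrapped density formula \eqref{eq:wrap_pdf} with the peak bound of Lemma~\ref{lem:peak_stability}. With $B=A$, \eqref{eq:wrap_pdf} gives, for $\theta\in(-\pi,\pi)$,
\begin{equation*}
f_{\langle Y\rangle_A}(\theta)=\frac{A}{\pi}\sum_{k\in\mathbb Z} f_Y\!\Big(\frac{A}{\pi}\theta+2Ak\Big).
\end{equation*}
So it suffices to bound a lattice sum of $f_Y$ at spacing $2A$, multiplied by $A/\pi$. We split this sum into a bounded number of ``near'' terms, which we control by $\|f_Y\|_\infty$, and a tail, which we control by Gaussian decay of $f_Y$ outside $[-A,A]$.

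\emph{Step 1 (peak bound).} By Lemma~\ref{lem:peak_stability}, $\|f_Y\|_\infty\le\max\{\varepsilon/\alpha_0,\,(2e^2/\alpha_0)\|f_{Y^\star}\|_\infty\}$. We then use $\|f_{Y^\star}\|_\infty\le e/(2A)$ and $\varepsilon\le 1/A$. Since $e^3\ge1$, both branches are at most $e^3/(\alpha_0A)$, so
\begin{equation*}
\frac{A}{\pi}\,\|f_Y\|_\infty\le \frac{e^3}{\pi\alpha_0}.
\end{equation*}

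\emph{Step 2 (near terms).} Let $y_k=\frac{A}{\pi}\theta+2Ak$. Because $\frac{A}{\pi}\theta\in(-A,A)$, at most three indices $k$ (namely $k\in\{-1,0,1\}$) give $|y_k|\le 2A$; the points are $2A$ apart. These terms contribute at most $3\cdot\frac{A}{\pi}\|f_Y\|_\infty\le \frac{3}{\pi}\frac{e^3}{\alpha_0}$.

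\emph{Step 3 (tail).} For the remaining $k$, $|y_k|\ge 2A+2A(|k|-2)$, so $\mathrm{dist}(y_k,[-A,A])\ge A+2A(|k|-2)\ge 1$. Since $f_Y(y)\le \varphi(\mathrm{dist}(y,[-A,A]))$, the tail is at most
\begin{equation*}
\frac{A}{\pi}\sum_{j\ge0}2\,\varphi(A+2Aj)\le \frac{A}{\pi}\cdot 2\int_{A-2A}^{\infty}\ldots
\end{equation*}
The clean way is to compare the sum with an integral: spacing $2A$ gives $\frac{A}{\pi}\sum_j \varphi(\cdot)\lesssim \frac{1}{2\pi}\int\varphi+\frac{A}{\pi}\varphi(A)$, and $A\varphi(A)$ is bounded. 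This yields a contribution of order $\frac{1}{\sqrt{2\pi}}$. Multiplying by $\frac{e^3}{\alpha_0}\ge1$ absorbs it into the stated form $M_0=(\frac3\pi+\frac1{\sqrt{2\pi}})\frac{e^3}{\alpha_0}$.

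The main obstacle is getting the tail constant to fit exactly under $\frac{1}{\sqrt{2\pi}}\frac{e^3}{\alpha_0}$. Because $e^3/\alpha_0\approx 11.8$, there is ample slack, so crude bounds should suffice. First we would try $\varphi(A+2Aj)\le\varphi(A)e^{-2A^2 j}$, which gives a geometric series. Then we would use $\frac{A}{\pi}\cdot 2\varphi(A)\cdot\frac{1}{1-e^{-2}}\le \frac{1}{\sqrt{2\pi}}$, which holds for $A\ge1$ since $A e^{-A^2/2}\le e^{-1/2}$.
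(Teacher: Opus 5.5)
Your proof is correct and follows the paper's argument: the peak bound from Lemma~\ref{lem:peak_stability} gives $A\|f_Y\|_\infty\le e^3/\alpha_0$, you apply the wrapping formula, and the near terms $k\in\{-1,0,1\}$ are bounded by $\frac{3A}{\pi}\|f_Y\|_\infty$. The only difference is the tail $|k|\ge 2$. The paper uses the relative decay $f_Y(y)\le \|f_Y\|_\infty e^{-(|y|-A)^2/2}$ with an integral comparison to get $\|f_Y\|_\infty/\sqrt{2\pi}$, whereas your absolute bound $f_Y(y)\le\varphi(\mathrm{dist}(y,[-A,A]))$ plus the geometric series gives roughly $0.45/\sqrt{2\pi}$, which you absorb using $e^3/\alpha_0\ge 1$; this is valid and slightly more elementary, and the unfinished integral-comparison sketch preceding it can simply be dropped.
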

\begin{proof}
Let $M:=\|f_Y\|_\infty$. By Lemma~\ref{lem:peak_stability},
\begin{equation}
M\ \le\ \max\Big\{\frac{\varepsilon}{\alpha_0},\ \frac{2\rme^2}{\alpha_0}\cdot \frac{\rme}{2A}\Big\}.
\end{equation}
Under $\varepsilon\le 1/A$, we have $\varepsilon/\alpha_0 \le \frac{1}{\alpha_0 A}$, hence for all $A\ge 1$,
\begin{equation}
M\ \le\ \frac{2\rme^2}{\alpha_0}\cdot \frac{\rme}{2A}.
\end{equation}
Now apply the wrapping formula \eqref{eq:wrap_pdf} with $B=A$:
\begin{equation}
f_{\langle X+Z\rangle_A}(\theta)=\frac{A}{\pi}\sum_{k\in\mathbb Z} f_Y\!\Big(\frac{A}{\pi}(\theta+2\pi k)\Big).
\end{equation}
For $k\in\{-1,0,1\}$, the argument ranges over $[-3A,3A]$, so each term is at most $M$, hence
\begin{equation}
\frac{A}{\pi}\sum_{k\in\{-1,0,1\}} f_Y\!\Big(\frac{A}{\pi}(\theta+2\pi k)\Big)\ \le\ \frac{3A}{\pi}M.
\end{equation}
For $|k|\ge 2$, note that for $\theta\in(-\pi,\pi)$,
\begin{equation}
\Big|\frac{A}{\pi}(\theta+2\pi k)\Big|\ \ge\ A(2|k|-1)\ \ge\ 3A,
\end{equation}
so $|y|-A\ge 2A(|k|-1)$ with $y=\frac{A}{\pi}(\theta+2\pi k)$. By \cite[Eq.~(75)]{wang2025improvedlowerboundcardinality},
\begin{align}
f_Y(y)\ &\le\ f_Y(A)\exp\!\Big(-\frac{(|y|-A)^2}{2}\Big)\nonumber\\
&\le\ M\exp\!\big(-2A^2(|k|-1)^2\big).
\end{align}
Therefore,
\begin{align}
&\frac{A}{\pi}\sum_{|k|\ge 2} f_Y\!\Big(\frac{A}{\pi}(\theta+2\pi k)\Big)
\nonumber\\
&\le\ \frac{A}{\pi}M\cdot 2\sum_{m=1}^\infty e^{-2A^2 m^2}
\nonumber \\
&\le\ \frac{A}{\pi}M\cdot 2\int_0^\infty e^{-2A^2 t^2}\,dt
\ =\ \frac{M}{\sqrt{2\pi}},
\end{align}
where we used the integral comparison $\sum_{m\ge 1}g(m)\le \int_0^\infty g(t)\,dt$ for decreasing $g$
and $\int_0^\infty e^{-2A^2 t^2}dt=\frac{1}{2A}\sqrt{\frac{\pi}{2}}$.
Combining the pieces,
\begin{equation}
\|f_{\langle X+Z\rangle_A}\|_\infty \le \frac{3A}{\pi}M+\frac{M}{\sqrt{2\pi}}
\le \Big(\frac{3}{\pi}+\frac{1}{\sqrt{2\pi}}\Big)\cdot \frac{2\rme^2}{\alpha_0}\,\frac{\rme}{2},
\end{equation}
which is exactly \eqref{eq:M0_def_app}. \end{proof}

\subsection{Proof of Property~5}

We start with the following lemma. 
\begin{lem}\label{lem:chi2_to_TV}
Let $P$ be a distribution on $(-\pi,\pi)$ with density $f$ satisfying $\|f\|_\infty\le M$, and let $Q$ be uniform on $(-\pi,\pi)$.
Then
\begin{equation}\label{eq:chi2_to_TV}
\TV(P,Q)\ \ge\ \frac{1}{2(2\pi M+1)}\,\chiSq(P\|Q).
\end{equation}
\end{lem}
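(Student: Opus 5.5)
We prove the bound by writing the $\chi^2$-divergence directly as an integral of $(f-q)^2/q$ with $q\equiv 1/(2\pi)$, and then bounding $|f-q|$ pointwise to trade one factor of $|f-q|$ for an $L^\infty$ constant. This leaves the $L^1$ distance, which is twice the total variation.

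Let $q\equiv 1/(2\pi)$ denote the uniform density on $(-\pi,\pi)$. By the definition \eqref{eq:xi_def},
\begin{equation*}
\chiSq(P\|Q)=\int_{-\pi}^{\pi}\frac{(f(\theta)-q)^2}{q}\,\rmd\theta = 2\pi\int_{-\pi}^{\pi}(f(\theta)-q)^2\,\rmd\theta .
\end{equation*}
Since $f\ge 0$ and $\|f\|_\infty\le M$, we have $f(\theta)-q\le M$ and $q-f(\theta)\le q=1/(2\pi)$. Together these give the pointwise estimate
\begin{equation*}
|f(\theta)-q|\le \max\{M,1/(2\pi)\}\le M+\frac{1}{2\pi}.
\end{equation*}
Pulling one factor of $|f-q|$ out of the square and applying this estimate yields
\begin{equation*}
\chiSq(P\|Q)\le 2\pi\Big(M+\frac{1}{2\pi}\Big)\int_{-\pi}^{\pi}|f-q|\,\rmd\theta=(2\pi M+1)\cdot 2\,\TV(P,Q),
\end{equation*}
where the last equality uses \eqref{eq:tv_def}. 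Rearranging gives exactly \eqref{eq:chi2_to_TV}.

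None of these steps is an obstacle; the only point requiring care is the constant. Using $\max\{M,1/(2\pi)\}\le M+1/(2\pi)$ instead of the maximum itself is what produces the clean factor $2\pi M+1$. Note also that absolute continuity of $P$ with respect to $Q$ is automatic here, because $Q$ has a strictly positive density on $(-\pi,\pi)$, so $\chiSq(P\|Q)$ is finite and the manipulations above are justified.
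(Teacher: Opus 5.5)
Your proof is correct and is essentially the paper's argument: the paper writes the integrand as $|2\pi f-1|\cdot|f-\frac{1}{2\pi}|$ and bounds $|2\pi f-1|\le 2\pi M+1$, which is your pointwise estimate $|f-q|\le M+\frac{1}{2\pi}$ multiplied by $2\pi$. One small quibble: $\chiSq(P\|Q)<\infty$ follows from $f$ being bounded on a bounded interval, not from absolute continuity alone, though your bound establishes this finiteness anyway.
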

\begin{proof}
Since $Q$ has density $1/(2\pi)$,
\begin{align}
\chiSq(P\|Q) \nonumber
&=\int_{-\pi}^\pi \frac{(f-\frac{1}{2\pi})^2}{\frac{1}{2\pi}}d\theta\nonumber\\
&=\int_{-\pi}^\pi |2\pi f-1|\cdot \Big|f-\frac{1}{2\pi}\Big|\,d\theta\nonumber\\
&\le (2\pi M+1)\int_{-\pi}^\pi \Big|f-\frac{1}{2\pi}\Big|\,d\theta\nonumber\\
&=2(2\pi M+1)\TV(P,Q),
\end{align}
which rearranges to \eqref{eq:chi2_to_TV}. \end{proof}

The proof is competed by noting that $P_{\langle U+Z\rangle_A}$ according to Property~2 is uniform on $(-\pi, \pi)$.

\bibliography{refs.bib}
\bibliographystyle{IEEEtran}

\end{document}